\newtheorem{theorem}{\bf Theorem}[section]
\newtheorem{remark}{\bf Remark}[section]
\newtheorem{proposition}{\bf Proposition}[section]
\newcommand{\tab}{\hspace{5mm}}
\newcommand{\mbc}{\mathbb{C}}
\newcommand{\mbh}{\mathbb{H}}
\newcommand{\mbr}{\mathbb{R}}
\newcommand{\mfa}{\mathfrak{a}}
\newcommand{\mfb}{\mathfrak{b}}
\newcommand{\mcd}{\mathcal{D}}
\newcommand{\mcl}{\mathcal{L}}
\newcommand{\mcm}{\mathcal{M}}
\newcommand{\Res}{\text{Re}(s)}
\begin{document}
\title{Towards Quantized Number Theory: Spectral Operators and an Asymmetric Criterion for the Riemann Hypothesis}
\author{Michel L. Lapidus}
\address{University of California\\
Department of Mathematics\\
900 University Ave.\\
Riverside, CA 92521-0135, USA}

\subject{Mathematics, mathematical and theoretical physics, fractal geometry, spectral geometry, number theory, operator theory, functional analysis, ordinary and partial differential equations, spectral theory, quantum mechanics, quantum field theory, statistical physics, noncommutative geometry.}

\keywords{Riemann zeta function, Riemann hypothesis (RH), quantization, quantized number theory, fractal strings, geometry and spectra, Minkowski dimension, Minkowski measurability, complex dimensions, Weyl--Berry conjecture, fractal drums, infinitesimal shift, spectral operator, invertbility, quantized Dirichlet series and Euler product, universality, phase transitions, symmetric and asymmetric criteria for RH.}

\corres{Michel L. Lapidus \\
\email{lapidus@math.ucr.edu} }

\begin{abstract}
This research expository article contains a survey of earlier work (in \S2--\S4) but also contains a main new result (in \S5), which we first describe. Given $c \geq 0$, the spectral operator $\mfa = \mfa_c$ can be thought of intuitively as the operator which sends the geometry onto the spectrum of a fractal string of dimension not exceeding $c$. Rigorously, it turns out to coincide with a suitable quantization of the Riemann zeta function $\zeta = \zeta(s)$: $\mfa = \zeta (\partial)$, where $\partial = \partial_c$ is the infinitesimal shift of the real line acting on the weighted Hilbert space $L^2 (\mathbb{R}, e^{-2ct} dt)$. In this paper, we establish a new asymmetric criterion for the Riemann hypothesis, expressed in terms of the invertibility of the spectral operator for all values of the dimension parameter $c \in (0, 1/2)$ (i.e., for all $c$ in the left half of the critical interval $(0,1)$). This corresponds (conditionally) to a mathematical (and perhaps also, physical) ``phase transition'' occurring in the midfractal case when $c= 1/2$. Both the universality and the non-universality of $\zeta = \zeta (s)$ in the right (resp., left) critical strip $\{1/2 < \Res < 1 \}$ (resp., $\{0 < \Res < 1/2 \}$) play a key role in this context. These new results are presented in \S5. In \S2, we briefly discuss earlier joint work on the complex dimensions of fractal strings, while in \S3 and \S4, we survey earlier related work of the author with H. Maier and with H. Herichi, respectively, in which were established symmetric criteria for the Riemann hypothesis, expressed respectively in terms of a family of natural inverse spectral problems for fractal strings of Minkowski dimension $D \in (0,1),$ with $D \neq 1/2$, and of the quasi-invertibility of the family of spectral operators $\mfa_c$ (with $c \in (0,1), c \neq 1/2$).
\end{abstract}

%

\maketitle

\tableofcontents

\section{Introduction}
Our main goal in this paper is to first briefly explain (in \S2) the general ideas and results concerning fractal strings and their complex dimensions which led to the intimate connections (discussed in \S3) between the vibrations of fractal strings and the Riemann zeta function $\zeta = \zeta (s)$ and then, the Riemann hypothesis, discovered in the early 1990s by the author and his collaborators (Carl Pomerance [LapPom1,2] and Helmut Maier [LapMai1,2], respectively). The resulting geometric interpretation of the critical strip $0 \leq \Res \leq 1$ for $\zeta = \zeta(s)$ was made entirely rigorous by means of the mathematical theory of complex dimensions developed by the author and Machiel van Frankenhuijsen in a series of research monographs [Lap-vFr1--3] (and papers). Due to space limitations, we will not be able to do justice to this theory, for which the interested reader can refer to the latest book in the above series, [Lap-vFr3], and many relevant references therein. \\

\tab In \S 6.3.1 and \S 6.3.2 of [Lap-vFr2,3] was introduced, at the semi-heuristic level, the notion of spectral operator $\mfa = \mfa_c$, which sends the geometry onto the spectrum of a fractal string. Intuitively, the parameter $c$ plays here the role of the upper bound  for the Minkowski dimensions of the fractal strings on which $\mfa = \mfa_c$ acts. In the forthcoming book, \cite{HerLap1}, and in an associated series of papers, [HerLap2--5], Hafedh Herichi and the author have developed a mathematical theory of the spectral operator, within a rigorous functional analytic framework. In particular, they have precisely determined its spectrum as well as that of the closely related infinitesimal shift (of the real line) $\partial = \partial_c$, viewed as the first order differential operator $d/dt$ acting on a suitable Hilbert space of functions, the weighted $L^2$-space $\mathbb{H}_c = L^2 (\mathbb{R}, e^{-2ct}dt)$, where $c \in \mathbb{R}$ is arbitrary. It turns out that $\mfa_c = \zeta(\partial_c)$ or, more briefly, $\mfa = \zeta(\partial),$ in the sense of the functional calculus for the unbounded normal operator $\partial = \partial_c.$ In this manner, the spectral operator $\mfa = \mfa_c$ can be viewed as a suitable quantization of the Riemann zeta function. Actually, in \cite{HerLap1} (see also \cite{HerLap5}), a quantized (i.e., an operator-valued) Dirichlet series, Euler product representation and ``analytic continuation'' of $\mfa_c = \zeta (\partial_c)$ are obtained (for suitable values of $c$ and/or under appropriate assumptions). \\

\tab In this context, the question of the invertibility of the spectral operator is a key question. It is shown in \cite{HerLap1} (see also [HerLap2,3]) that $\mfa = \mfa_c$ is ``quasi-invertible'' for all $c \in (0,1), c \neq 1/2$ [equivalently, for all $c \in (0, 1/2)$ or for all $c \in (1/2, 1),$ respectively] if and only the Riemann hypothesis is true (i.e., iff $\zeta (s) = 0$ with $0 < \Res < 1$ implies that $\Res = 1/2$). This result provides a natural operator theoretic counterpart of the result of the author and H. Maier in [LapMa1,2] showing that a suitable inverse spectral problem, (ISP)$_D$, for fractal strings (of dimension $D$) is true in all possible dimensions $D$ in $(0,1), D \neq 1/2$ (or, equivalently, due to the functional equation of $\zeta$, for all $D \in (0, 1/2)$ or else for all $D \in (1/2, 1)$, respectively) if and only if the Riemann hypothesis (RH) is true.\\

\tab In \cite{HerLap1} (see also [HerLap2,3]), the spectrum of $\mfa = \mfa_c$ is shown to obey several ``mathematical phase transitions''; one conditionally (i.e., if RH holds), in the midfractal case (in the sense of [Lap1--3]) when $D = 1/2$, and another one unconditionally (i.e., independently of the truth of RH), at $D=1$. The first phase transition is intimately connected with the universality of the Riemann zeta function (see [HerLap1,4]). A brief exposition of some of the most relevant results of [HerLap1--5] is provided in \S4 of the present article.\\

\tab The spectral reformulation of the Riemann hypothesis obtained in \cite{LapMai2}, revisited and extended in [Lap-vFr1--3] in the light of the mathematical theory of complex dimensions, and given a rigorous operator theoretic version in [HerLap1--3], is a symmetric criterion for RH. Indeed, due to the functional equation satisfied by $\zeta = \zeta(s)$ (and thus connecting $\zeta(s)$ and $\zeta (1-s)$), it can be equivalently formulated for all values of the underlying parameter ($D$ in [LapMai1,2] and [Lap-vFr1--3] or $c$ in [HerLap1--5]). By contrast, the new criterion for the Riemann hypothesis obtained in this paper is asymmetric, in the sense that  it is only stated (and valid) for all values of the underlying parameter $c$ in the open interval $(0, 1/2)$. In fact, its counterpart for all values of $c$ in the symmetric interval $(1/2, 1)$ cannot be true, due to the universality of the Riemann zeta function $\zeta = \zeta(s)$ in the right critical strip $1/2 < \Res < 1$. More specifically, the main new result of this paper is the following (see \S5 and Theorem \ref{T1} below): The spectral operator $\mfa = \mfa_c$ is invertible (in the usual sense of the invertibility of an unbounded operator) for all values of $c$ in $(0, 1/2)$ if and only if the Riemann hypothesis is true. Again, this new {\em asymmetric} spectral reformulation of RH does not have any analog for the interval $(1/2, 1)$ because the spectral operator is not invertible for any value of $c$ in $(1/2, 1)$. This raises the question of finding an appropriate physical interpretation for this asymmetry, which amounts to finding ``the'' physical origin of the phase transition (conditionally) occurring in the midfractal case where $c = 1/2$. \\

\tab In the previous context of the results of [LapMai1,2], early speculations were made by the author in [Lap2,3] to find a physical interpretation of the corresponding mathematical phase transition at $D=1/2$ in terms of Wilson's notion of complex dimension within the realm of his theory of phase transitions and critical phenomena in quantum statistical physics and quantum field theory \cite{Wil}. We leave it to the interested reader to find his or her physical interpretation of the asymmetric criterion for RH obtained in the present paper. Our own expectation is that physically, this asymmetry or phase transition finds its source in the symmetry breaking associated to an appropriate (and possibly yet to be discovered) quantum field theory. (Unlike in [BosCon1,2] or [Con, \S V.11], it would occur at $1/2$ rather than at $1$, the pole of $\zeta$.) Presumably, this quantum field theory should be intimately associated with number theory (or ``arithmetic''), geometry, dynamics and spectral theory, and probably formulated in the spirit of the conjectural picture for the generalized Riemann hypothesis (GRH) proposed in the author's earlier book, {\em In Search of the Riemann Zeros}: {\em Strings, Fractal Membranes and Noncommutative Spacetimes} \cite{Lap6}, in terms of a Riemann flow on the moduli space of quantized fractal strings (called ``fractal membranes'') and the associated flows of zeta functions (or partition functions) and of their zeros. However, we invite the readers to follow their own intuition and to be led into unknown mathematical and physical territory, wherever their own imagination will guide them.\\

\tab In closing this introduction, we stress that the main goal of this paper (beside providing in \S2--\S4 a brief survey of earlier relevant work which partly motivates it) is to obtain the following theorem, in \S5. (See Theorem \ref{T5.4} and the comments surrounding it for additional information.) Let $\mfb = \mfa \mfa^* = \mfa^* \mfa$ be the nonnegative self-adjoint operator naturally associated with the normal operator $\mfa$. For notational simplicity, we write $\mfa$ and $\mfb$ instead of $\mfa_c$ and $\mfb_c$, respectively. Note that for $c \leq 1$, both $\mfa$ and $\mfb$ are unbounded linear operators (according to the results of [HerLap1--4] discussed in \S4).

\begin{theorem}[Asymmetric criterion for RH]\label{T1}
Each of the following statements is equivalent to the Riemann hypothesis $($RH$):$
\begin{enumerate}
\item[$($i$)$] For every $c \in (0,1/2)$, the spectral operator $\mfa$ is invertible. 
\item[$($ii$)$] For every $c \in (0,1/2)$, the unbounded operator $\mfb$ is invertible.
\item[$($iii$)$] For every $c \in (0, 1/2)$, the nonnegative self-adjoint operator $\mfb$ is bounded away from zero.
\end{enumerate}
\end{theorem}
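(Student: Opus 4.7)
The plan is to handle the three-way equivalence of (i), (ii), (iii) by general spectral theory, and then to reduce the main statement (i)$\Leftrightarrow$RH to a quantitative estimate for $|\zeta|$ on vertical lines in the left critical strip, which I would attack via the functional equation of $\zeta$ and Littlewood's conditional bounds under RH.

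Since $\mfa = \zeta(\partial)$ and $\partial$ is normal, $\mfa$ is normal, and $\mfb = \mfa^* \mfa = \mfa \mfa^*$ is a nonnegative self-adjoint operator with $\sigma(\mfb) = \{|\lambda|^2 : \lambda \in \sigma(\mfa)\}$. Hence $0 \in \sigma(\mfa)$ iff $0 \in \sigma(\mfb)$, giving (i)$\Leftrightarrow$(ii); and for a nonnegative self-adjoint operator, invertibility is equivalent to being bounded below by a positive multiple of the identity, yielding (ii)$\Leftrightarrow$(iii). For the heart of the argument, the equivalence of (i) with RH, I would invoke the results recalled in \S4: the spectrum of the infinitesimal shift is $\sigma(\partial_c) = \{s \in \mbc : \Res = c\}$, so the spectral mapping theorem for the functional calculus of a normal operator gives
$$\sigma(\mfa_c) \;=\; \overline{\zeta(\{\Res = c\})}.$$
Consequently, $\mfa_c$ is invertible if and only if $\inf_{t\in\mbr}|\zeta(c+it)| > 0$.

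For the forward direction, fix $c \in (0,1/2)$ and apply the functional equation
$$\zeta(c+it) \;=\; \chi(c+it)\,\zeta(1-c-it), \qquad |\chi(c+it)| \sim (|t|/2\pi)^{1/2-c}\ \text{as}\ |t|\to\infty.$$
Under RH, Littlewood's classical conditional bound yields $|\zeta(\sigma+it)|^{-1} = O(\exp(A\log|t|/\log\log|t|))$ uniformly for $\sigma \geq 1-c > 1/2$, and this subpolynomial decay is dominated by the polynomial growth of $|\chi(c+it)|$; hence $|\zeta(c+it)| \to \infty$ as $|t| \to \infty$. Combined with the fact that, under RH, $\zeta(c+it)$ is nonvanishing for every $t\in\mbr$ (the line $\Res = c$ lies in the zero-free open strip $\{0<\Res<1/2\}$), continuity then gives $\inf_t |\zeta(c+it)| > 0$, i.e., (i). Conversely, if (i) holds for every $c \in (0,1/2)$, then $\zeta$ is zero-free on the whole open strip $\{0<\Res<1/2\}$; since $\chi$ is nonvanishing on the critical strip, the functional equation transfers this zero-freeness to $\{1/2<\Res<1\}$, and together with the classical nonvanishing on $\Res \in \{0,1\}$ this gives RH.

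The main obstacle is the quantitative estimate in the forward direction: one must verify that the polynomial gain $|t|^{1/2-c}$ from $\chi$ strictly beats the worst-case subpolynomial vanishing of $|\zeta(1-c-it)|$ that RH permits. This is precisely where the left/right asymmetry originates and why no analogous criterion can hold on $(1/2,1)$: the universality of $\zeta$ in the right critical strip (recalled in \S4) implies $\inf_{t}|\zeta(c+it)| = 0$ for every $c \in (1/2,1)$, so the spectral operator $\mfa_c$ is never invertible there.
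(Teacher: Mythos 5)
Your proposal is correct and follows essentially the same route as the paper: the spectral mapping identity $\sigma(\mfa_c)=c\ell\left(\zeta(\{\Res=c\})\right)$ reduces invertibility to $\inf_{t}|\zeta(c+it)|>0$, the converse uses the functional equation exactly as in the paper, and the equivalences with $\mfb$ are the same routine spectral-theoretic observations. The only difference is that where the paper cites Garunk\v stis--Steuding [GarSte, Lemma 4 \& Prop.\ 5] for the conditional lower bound on $|\zeta|$ along lines $\Res=c<1/2$, you sketch its proof directly (functional equation plus Littlewood-type conditional estimates for $1/\zeta$ to the right of the critical line), which is precisely how that cited lemma is established.
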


\tab Finally, we point out that for the reader's convenience, we have included (just before the references) a glossary of some of the main notation and terminology used in this paper.

\section{Geometry and Spectra of Fractal Strings}

In this section, we give a brief overview of the theory of fractal strings and their complex dimensions (see \cite{Lap-vFr3}). 

\subsection{Geometric zeta function and Minkowski dimension.}

A {\em fractal string} is a one-dimensional drum with fractal boundary. Hence, it can be realized geometrically  as a bounded open subset of the real line: $\Omega \subset \mathbb{R}$. As is well known, $\Omega$ is then the disjoint union of its connected components (open intervals). Therefore, we can write it as a countable union of disjoint (bounded) open intervals $I_j, j=1,2, \cdots : \ \Omega = \bigcup_{j=1}^\infty I_j.$ Denoting by $\ell_j$ the length of $I_j$ ($\ell_j = |I_j|$), we may assume without loss of generality that $\mathcal{L} = \{\ell_j\}_{j=1}^\infty$ is nonincreasing and (if $\mathcal{L}$ is infinite, which will always be assumed from now on, in order to avoid trivial special cases) that $\ell_j \downarrow 0 \text{ as } j \rightarrow \infty$. Since all of the key notions considered in fractal string theory are independent of the geometric realization $\Omega$, we may consider that a fractal string is equivalently given by the sequence of lengths (or {\em scales}) $\mathcal{L} = \{\ell_j \}_{j=1}^\infty$, with the distinct lengths written in nonincreasing order and according to their multiplicities. \\

\tab The geometric point of view, however, is essential to motivate and define the spectral problem associated with $\mcl$, as well as to connect and contrast the theory of fractal strings with the higher-dimensional geometric and spectral theory of fractal drums.\\

\tab To a given fractal string $\mathcal{L} = \{\ell_j \}_{j=1}^\infty$ (or $\Omega \subset \mathbb{R}$), we associate the {\em geometric zeta function} of $\mathcal{L}$, denoted by $\zeta_{\mathcal{L}}$, and defined (for $s \in \mathbb{C}$ with $\Res$ large enough) by $\zeta_{\mathcal{L}} (s) = \sum_{j=1}^\infty \ell_j^s.$ As it turns out, $\zeta_{\mathcal{L}}$ is holomorphic in the open, right half-plane  $\Pi := \{\Res > D \} $ (the set of all $s \in \mathbb{C}$ such that $\Res > D$), and $\Pi$ is the largest such half-plane. Here, the unique extended real number $D \in \mathbb{R} \cup \{\pm \infty \}$ is called the {\em abscissa of convergence} of $\zeta_\mathcal{L}$: $D = \inf \{\alpha \in \mathbb{R}: \sum_{j=1}^\infty \ell_j^\alpha < \infty \}$; see, e.g., [HardWr, Ser]. Early on (see [Lap2--3] and [Lap-vFr3, \S 1.2]), the author observed that $D$ coincides with the Minkowski dimension of $\mathcal{L}$ (i.e., of the boundary $\partial \Omega$ of any of its geometric realizations $\Omega$):  $D = D_M$. Recall that this notion of fractal dimension (often also called ``box dimension'' in the applied literature) and very different mathematically and physically from the classic Hausdorff dimension (see \cite{Lap1}), can be defined as follows: Given $\varepsilon > 0$, let $\Omega_\varepsilon = \{x \in \Omega: \text{dist} (x, \partial \Omega) < \varepsilon \}$ denote the $\varepsilon$-neighborhood (or {\em tubular neighborhood}) of $\Omega$, and let $V (\varepsilon) = |\Omega_{\varepsilon}|$, the volume (i.e., length or $1$-dimensional Lebesgue measure) of $\Omega_\varepsilon$. (It is shown in [LapPom1,2] that for a fixed $\varepsilon > 0$, $V(\varepsilon)$ depends only on $\mathcal{L} = \{\ell_j\}_{j=1}^\infty$. Hence, as was mentioned earlier, all of the notions discussed here depend only on $\mathcal{L}$ and not on the geometric representation of $\mathcal{L}$ by $\Omega \subset \mathbb{R}$.) Then, given $d \geq 0$, let $\mathcal{M}_d^*$ (resp., $\mathcal{M}_{*,d}$) denote the upper (resp., lower) limit of $V(\varepsilon)/\varepsilon^{1-d}$ as $\varepsilon \rightarrow 0^+$, called, respectively, the $d$-{\em dimensional upper} (resp., {\em lower}) {\em Minkowski content} of $\mcl$. Clearly, we always have $0 \leq \mathcal{M}_{*,d} \leq \mathcal{M}_d^* \leq \infty$. \\

\tab The {\em Minkowski dimension} of $\mathcal{L}$ (or, equivalently, of its boundary $\partial \Omega$), denoted by $D_M = D_M (\mathcal{L})$, is then given by $D_M := \inf \{d \geq 0 : \mathcal{M}_d^* < \infty\} = \sup \{d \geq 0: \mathcal{M}_d^* = + \infty\}.$ In fact, $D_M$ is the unique real number such that $\mathcal{M}_d^* = + \infty$ for $d < D_M$ and $\mathcal{M}_d^* = 0$ for $d > D_M$. (See, e.g., [Fal1, Fed, Mat, Lap-vFr3, Tri].) Physically, it can be thought of as a critical exponent since it gives information about the behavior of $V(\varepsilon)$ as a power of $\varepsilon$. In the sequel, for brevity and in light of the aforementioned identity $D = D_M$, we will use $D$ instead of $D_M$ to denote the Minkowski dimension of $\mathcal{L}$. A fractal string $\mathcal{L}$ (or, equivalently, its boundary $\partial \Omega$) is said to be {\em Minkowski nondegenerate} if $0 < \mathcal{M}_* (\leq) \mathcal{M}^* < \infty$, where $\mathcal{M}^* := \mathcal{M}_D^*$ and $\mathcal{M}_* := \mathcal{M}_{*,D}$ denote, respectively, the {\em upper} and the {\em lower Minkowski content} of $\mathcal{L}$ (i.e., of $\partial \Omega$). In addition, if $\mathcal{M_* = \mathcal{M}^*}$, we denote by $\mathcal{M}$ this common value and say that $\mathcal{L}$ (i.e., $\partial \Omega$) is {\em Minkowski measurable} and has {\em Minkowski content} $\mathcal{M}$. Hence, $\mathcal{L}$ is Minkowski measurable with Minkowski content $\mathcal{M}$ if and only if the following limit exists in $(0, + \infty)$ and $\lim_{\varepsilon \rightarrow 0^+} V (\varepsilon)/\varepsilon^{1-D} = \mathcal{M}.$

\subsection{Geometric interpretation of the critical strip.}

For a fractal string, we always have $0 \leq D = D_M \leq 1$; that is, $D = D_M$ lies in the (closed) {\em critical interval} $[0,1]$, where (in the terminology of \cite{Lap1}, later adopted in [Lap2--3, 6--9], [LapPom1--3], [LapMai1--2], \cite{HeLap}, [Lap-vFr1--3] and [HerLap1--5]), the case when $D=0$ (resp., $D=1$) is referred to as the {\em least} (resp., {\em most}) {\em fractal case}, while the case when $D = 1/2$ is called the {\em midfractal case.} As we shall see, the latter midfractal case $D=1/2$ plays a key role in the theory, in connection with the Riemann zeta function and the Riemann hypothesis. Recall that according to the celebrated Riemann hypothesis \cite{Rie} (translated in [Edw, App.]), the {\em critical zeros} (also called the {\em nontrivial zeros}) of the Riemann zeta function $\zeta = \zeta(s)$, that is, the zeros of $\zeta$ located in the {\em critical strip} $0 < \Res < 1$, all lie on the {\em critical line} $\{\Res = 1/2 \}$. Equivalently, $\zeta(s) = 0$ with $0 \leq \Res \leq 1$ implies that $\Res = 1/2.$ \\

\tab Indeed, towards the very beginning of the mathematical theory of fractal strings (starting with [Lap2--3], [LapPom1--3] and [LapMai1--2]), one conjectured the existence of a notion of {\em complex dimension} such that the left-hand side of the (closed) critical strip $0 \leq \Res \leq 1$ (the vertical line $\{\Res = 0 \}$) corresponds to the least fractal (or nonfractal) case $D=0$, the right-hand side (the vertical line $\{\Res = 1 \}$) corresponds to the most fractal case $D=1$, while the middle of the critical strip (the {\em critical line} $\{ \Res = 1/2 \}$) corresponds to the midfractal case $D = 1/2$. The answer turned out to be positive, as was intuitively clear after the works in [LapPom1--2], [Lap2--3] and, especially, in [LapMai1--2], and as was fully and rigorously justified in the mathematical theory of complex dimensions developed in [Lap-vFr1--3] (as well as in earlier and later papers by the same authors). We refer to the research monograph [Lap-vFr3] for a complete exposition of the theory of complex dimensions of fractal strings. We will briefly introduce it in \S 2$(d)$ below, but first of all, we wish to explain the origins of the connections between the Riemann zeta function and fractal strings, viewed as vibrating objects.

\subsection{From the geometry to the spectrum of a fractal string, via $\zeta = \zeta(s).$}

A fractal string, $\Omega = \cup_{j=1}^\infty I_j$, can be viewed as countably many ordinary strings (the intervals $I_j$ of lengths $\ell_j$), vibrating independently of one another. Equivalently, the strings $I_j$ can be thought of as the strings of a fractal harp. The (normalized frequency) spectrum $\sigma (\mathcal{L})$ of the fractal string $\mathcal{L}$ is given by the union (counting multiplicities) of the spectra of the intervals $I_j: \sigma (\mathcal{L}) = \bigsqcup_{j=1}^\infty \sigma (I_j)$. Note that the fundamental frequency of the $j$-th string $I_j$ is equal to $\ell_j^{-1}$; so that for each fixed integer $j \geq 1$, we have $\sigma (I_j) = \{n \cdot \ell_j^{-1} : n \geq 1 \}$ and hence, $\sigma (\mathcal{L}) = \{n \cdot \ell_j^{-1} : n \geq 1, j \geq 1 \}.$ Therefore, the {\em spectral zeta function} of $\mathcal{L}$ (or, equivalently, of the Dirichlet Laplacian on $\Omega$), denoted by $\zeta_{\nu}$ and defined by $\zeta_{\nu} (s) = \sum_{f \in \sigma (\mathcal{L})} f^{-s},$ for $\Res$ large enough, is given by the following product formula:
\begin{align}\label{2.8}
\zeta_{\nu} (s) = \zeta (s) \cdot \zeta_{\mathcal{L}}(s),
\end{align}
where $\zeta_{\mathcal{L}}$ is the geometric zeta function of $\mathcal{L}$ (given by its expression in \S 2$(a)$ or by its meromorphic continuation thereof). Formula \eqref{2.8} was first observed in [Lap2--3]. It, along with its counterpart just below, was key to our later developments. (We note that formula \eqref{2.8} was later reinterpreted and extended to various settings, in terms of complex dynamics; see [Tep] and [LalLap1--2].) At the level of the counting functions, \eqref{2.8} becomes $N_{\nu} (x) = \sum_{n=1}^{\infty} N_{\mathcal{L}} (x/n),$ for $x > 0, $ where $N_\mathcal{L}$ (resp., $N_{\nu}$) denotes the {\em geometric} (resp., {\em spectral} or {\em frequency}) {\em counting function} of $\mathcal{L}$, counting respectively the number of reciprocal lengths $\ell_j^{-1}$ or of frequencies $f$ not exceeding $x > 0$: $N_{\mathcal{L}} (x) = 
\sum_{\ell_j^{-1} \leq x} 1$ and $N_{\nu} (x) = \sum_{f \in \sigma (\mathcal{L})}1 $, where both the lengths and the frequencies (essentially, the square roots of the eigenvalues) are counted according to their multiplicities. (See [Lap-vFr3, Thm. 1.21].)\\

\tab For the reader who is not that familiar with analytic number theory, we next state some of the basic properties of $\zeta = \zeta(s)$ used throughout this paper, most often implicitly. The Riemann zeta function has a meromorphic extension to all of $\mathbb{C}$, still denoted by $\zeta = \zeta(s)$. Furthermore, for all $s \in \mbc$ such that $\Res > 1$, $\zeta(s)$ is given by the classic {\em Dirichlet series} $\sum_{n=1}^\infty n^{-s}$ (used to prove Equation \eqref{2.8}) and {\em Euler product} $\prod_{p \in \mathcal{P}} (1-p^{-s})^{-1}$, where $\mathcal{P}$ denotes the set of prime numbers. It follows that $\zeta(s) \neq 0$ for $\Res > 1$. Therefore, in light of the {\em functional equation} satisfied by $\zeta$ (namely, $\xi(s) = \xi (1-s)$, where $\xi (s) := \pi^{-s/2} \Gamma (s/2) \zeta (s)$ denotes the {\em completed} or {\em global} Riemann zeta function and $\Gamma$ is the classic gamma function), it follows that $\zeta (s) \neq 0$ for $\Res < 0$, except at the {\em trivial zeros} of $\zeta$, located at $s= -2, -4, -6, \cdots .$ Also, according to Hadamard's theorem (and the functional equation), we have $\zeta(s) \neq 0$ for $\Res = 1$ (and hence also for $\Res = 0$). Therefore, the other zeros of $\zeta$ (called the {\em critical} or {\em nontrivial zeros}) must lie in the (open) critical strip $0 < \Res < 1$. See, e.g., [Edw, Ing, KarVor, Pat, Ti] for these and other key properties of $\zeta$.

\subsection{Complex dimensions of fractal strings, via explicit formulas.}

Let $\mathcal{L} = \{\ell_j \}_{j=1}^\infty$ be a given fractal string and $\zeta_{\mathcal{L}} = \zeta_{\mathcal{L}} (s)$ denote its geometric zeta function, given in \S 2$(a)$ for $\Res > D$, where $D = D_{\mathcal{L}}$ is the Minkowski dimension of $\mathcal{L}.$ Briefly, the {\em complex dimensions} of $\mathcal{L}$ (or of $\partial \Omega$, where $\Omega$ is any geometric representation of $\mathcal{L}$) are the poles of the meromorphic continuation of $\zeta_{\mathcal{L}}$. More precisely, let $U \subset \mathbb{C}$ be a domain to which $\zeta_{\mcl}$ can be meromorphically continued. Then, the {\em visible complex dimensions} of $\zeta_{\mcl}$ (in $U$) are the poles of $\zeta_{\mcl}$ in $U$. If $U = \mathbb{C}$ or when no ambiguity may arise, we just call them the {\em complex dimensions} of $\mcl$. We denote by $\mathcal{D} = \mathcal{D}_{\mcl}$ the {\em set of complex dimensions} of $\mcl$. In light of the definition of the abscissa of convergence of $\zeta_{\mcl}$ recalled in \S 2$(a)$, $\zeta_{\mcl}$ is holomorphic for $\Res > D$. In fact, it can be shown  (using well-known results about Dirichlet series, see [Ser, \S VI.2.3]) that $\zeta_{\mcl} (s) \rightarrow + \infty$ as $s \rightarrow D^+, s \in \mathbb{R}$. Therefore, $\{ \Res > D \}$ is the largest open right half-plane (of the form $\{\Res > \alpha \}$, for some $\alpha \in \mathbb{R} \cup \{\pm \infty \}$) to which $\zeta_{\mcl}$ can be holomorphically continued. (Furthermore, $\{\Res > D \}$ is the largest such half-plane in which the series $\sum_{j=1}^\infty \ell_j^s$ is convergent or equivalently here, absolutely convergent; see, e.g., [Ser, \S VI.3].) It follows, in particular, that we must always have $\mathcal{D} \subseteq \{\Res \leq D \}$. If, in addition, $\zeta_{\mcl}$ can be meromorphically extended to an open neighborhood of $D$, then $D \in \mathcal{D}$ and $D = \max \{\Res : \omega \in \mathcal{D} \}.$ Here and thereafter, we (often) use the short-hand notation $\{\Res > \alpha \} := \{s \in \mathbb{C}: \Res > \alpha \}.$ Similarly, if $\alpha \in  \mathbb{R}, \{\Res = \alpha \}$ stands for the vertical line $\{s \in \mathbb{C} : \Res = \alpha \}$; and analogously for the vertical strip $\{\alpha \leq \Res \leq \beta \},$ with $\alpha \leq \beta$. \\

\tab Since we do not aim here at giving a full description of the theory of complex dimensions, for which we refer to [Lap-vFr1--3] (and especially, to the book [Lap-vFr3]), we will simply provide here a few examples and results. We note that in the present exposition, we reverse the chronological order since the results from [LapPom1--2] and [LapMai1--2] to be presented in \S3 below were obtained before the rigorous theory of complex dimensions was developed by the author and Machiel van Frankenhuijsen in [Lap-vFr1--3], and in fact (along with the work in [Lap1--3, HeLap], in particular), provided a natural heuristic and mathematical motivation for it. \\

\tab Let $\mcl = CS$ be the {\em Cantor string}; it is associated with the bounded open set $\Omega \subset \mathbb{R}$ defined as the complement in $[0,1]$ of the classic (ternary) Cantor set $C$. Note that $\Omega$ is merely the union of the sequence of disjoint open intervals (the ``middle thirds'') which are deleted in the standard construction of $C$. So that $\mcl$ is given by the sequence of lengths $\frac{1}{3}, \frac{1}{9}, \frac{1}{9}, \frac{1}{27}, \frac{1}{27}, \frac{1}{27}, \frac{1}{27}, \cdots,$ or $3^{-n}$ repeated with multiplicity $2^{n-1}$ for $n=1,2,3, \cdots$. Hence, $\zeta_{\mcl}$ has a meromorphic continuation to all of $\mathbb{C}$ given by $\zeta_{\mcl} (s) = 3^{-s}/(1-2.3^{-s}),$ for all $s \in \mathbb{C}.$ It follows that the set of complex dimensions of $C$ (or, equivalently, of the Cantor set $C = \partial \Omega$) is given by $\mathcal{D}_{CS} = \{D + in{\bf p} : n \in \mathbb{Z} \},$ where $i := \sqrt{-1}, D = \log_3 2$ is the Minkowski dimension of $CS$ (or, equivalently, of $C$) and ${\bf p} := 2 \pi/ \log_3$ is the {\em oscillatory period} of $CS$. Here and in the sequel, we let $\mathbb{Z} := \{0, \pm 1, \pm 2, \cdots \}$.\\

\tab In [Lap-vFr3, Chs. 5, 6 \& 8] are given general explicit formulas expressing, for instance, the geometric and spectral counting functions $N_{\mathcal{L}} = N_\mcl (x)$ and $N_{\nu} = N_{\nu} (x)$, respectively, as well as the volume (i.e., length) $V = V(\varepsilon) := |\{x \in \Omega: dist (x, \partial \Omega) < \varepsilon \}|$ of the (inner) $\varepsilon$-tubes of a fractal string $\mcl$, in terms of the underlying complex dimensions of $\mcl$. For example, under suitable growth conditions on $\zeta_{\mcl}$, the general {\em fractal tube formula} obtained in [Lap-vFr3, Ch. 8] is of the following form (for notational simplicity, we assume here that all the poles are simple):
\begin{align}\label{2.13}
V(\varepsilon) = \sum_{\omega \in \mathcal{D}_{\mcl}} res (\zeta_{\mcl}; \omega) \frac{(2 \varepsilon)^{1-\omega}}{\omega (1- \omega)} + R(\varepsilon), 
\end{align}
where the error term $R = R (\varepsilon)$ can be explicitly estimated. Here and in the sequel, $res (\zeta_{\mcl}; \omega)$ denotes the residue of the meromorphic function $\zeta_{\mcl} = \zeta_{\mcl} (s)$ evaluated at $s = \omega$. \\

\tab For a self-similar string, we have $R (\varepsilon) \equiv 0$ in \eqref{2.13} and so the formula is {\em exact}. In particular, for the above Cantor string $CS$, we have $\varepsilon^{-(1-D)} V_{CS} = G (\log_3 \varepsilon^{-1}) - 2 \varepsilon^D,$ where $G$ is a nonconstant periodic function (of period 1) which is bounded away from 0 and infinity. Therefore, the Cantor string $CS$ (and hence also, the Cantor set $C$) is not Minkowski measurable (as was first shown in \cite{LapPom2} via a direct computation) and its lower and upper Minkowski content are respectively given by $\mathcal{M}_* = \min G > 0$ and $\mathcal{M}^* = \max G < \infty$ (with explicit values first calculated in \cite{LapPom2}, see also [Lap-vFr3, Eq. (1.12)]). \\

\tab At the level of the counting functions $N_{CS}$ and $N_{\nu} = N_{\nu, CS}$, we have the following explicit formulas, still for the Cantor string $CS$: 
\[ N_{CS} (x) = \frac{1}{2 \log 3} \sum_{n \in \mathbb{Z}} \frac{x^{D + in{\bf p}}}{D + in{\bf p}} -1 \]
and  
\[ N_{\nu, CS} (x) = x + \frac{1}{2 \log 3} \sum_{n \in \mathbb{Z}} \zeta (D + in{\bf p}) \frac{x^{D + in{\bf p}}}{D + in{\bf p}}. \] 
(See [Lap-vFr3, \S 1.2.2 \& \S 6.4.3].) The first explicit formula was obtained by Riemann \cite{Rie} in 1858 and is certainly one of the most beautiful formulas in mathematics. It expresses the counting function of the primes in terms of the zeros and the (single) pole of $\zeta$ (i.e., in terms of the poles of $-\zeta'/\zeta$). A modern version of Riemann's explicit formula can be stated as follows: Let $\mathcal{N} (x) := \sum_{p^n \leq x} \frac{1}{n}$ be the function counting all of the prime powers $p^n \leq x$ with a weight $\frac{1}{n}$. Then 
\[ \mathcal{N} (x) = Li (x) - \sum_{\rho} Li (x^{\rho}) + \int_x^{+ \infty} \frac{1}{x^2 - 1} \frac{dx}{x \log x} - \log 2,\] 
where the sum is taken over the zeros $\rho$ of $\zeta$ and $Li (x) := \int_0^x \frac{dt}{\log t}$ is the integral logarithm. (See, e.g., [Edw, \S 1.16], [Lap-vFr3, pp.  3 \& 141], \cite{Ing}, \cite{Pat} or \cite{Ti}.) The explicit formulas obtained in [Lap-vFr3] extend this explicit formula to a general setting where the corresponding zeta function does not typically satisfy a functional equation or have an Euler product representation. (See [Lap-vFr3, Chs. 5--11].) \\

\tab From our present point of view, an important application of the general explicit formulas of [Lap-vFr3] is that if we write $\eta_{\mcl} = \sum_{j=1}^\infty \delta_{\ell_j^{-1}}$ and $\nu_{\mcl} = \sum_{f \in \sigma (\mcl)} \delta_f$, where $\delta_x$ denotes the Dirac measure (or distribution) at $x$, then (under suitable assumptions and if the complex dimensions are simple, see [Lap-vFr3, \S 6.3.1]) we have the distributional identities: $\eta_{\mcl} = \sum_{\omega \in \mathcal{D}_{\mcl}} res (\zeta_{\mcl}; \omega) x^{\omega -1}$ and $\nu_{\mcl} = \zeta_{\mcl} (1) + \sum_{\omega \in \mathcal{D}_{\mcl}} res (\zeta_{\mcl}; \omega) \zeta (\omega) x^{\omega -1},$ where we have neglected the error term. In [Lap-vFr3, \S 6.3.1], these formulas are respectively called the {\em density of geometric states} and the {\em density of spectral states.} At the level of the measures, the {\em spectral operator}, which will play a key role in the sequel (see, especially, \S4 and \S5) can be intuitively interpreted (as in [Lap-vFr3, \S 6.3.2]) as the operator sending the geometry (represented by $\eta_{\mcl}$) onto the spectrum (represented by $\nu_{\mathcal{L}}$) of a fractal string $\mcl$. In light of the above explicit formulas for $\eta_{\mcl}$ and $\nu_{\mcl}$, it amounts (essentially) to multiplying by $\zeta(\omega)$ the local term $res (\zeta_{\mcl}; \omega) x^{\omega -1}$ associated with each complex dimension $\omega \in \mathcal{D}_{\mcl}$. In \S 4$(a)$, we will discuss another heuristic interpretation of the spectral operator, based on [Lap-vFr3, \S 6.3.2] and given at the level of the counting functions $N_{\mcl}$ and $N_{\nu}$, while in \S 4$(b)$, we will give a rigorous definition of the spectral operator (obtained in [HerLap1--5]).

\begin{remark}\label{R2.1}
$(a)$ At a fundamental level, the theory of complex dimensions developed in $[$Lap-vFr1--3$]$ is a theory of geometric, spectral, dynamical, or arithmetic {\em oscillations} $($or ``vibrations''$)$. More concretely, as is apparent in the explicit formulas discussed in this subsection $($see, e.g., Eq. \eqref{2.13}$)$, the {\em real} $($resp., {\em imaginary}$)$ parts of the complex dimensions are associated with the {\em amplitudes} $($resp., {\em frequencies}$)$ of these oscillations $($viewed at the geometric level, for example, as waves propagating in the space of ``scales''$)$. \\

\tab $(b)$ In $[$Lap-vFr1--3$]$, an object is said to be {\em fractal} if its associated zeta function has at least one {\em nonreal} complex dimension $($i.e., pole$)$, with positive real part. Accordingly, all self-similar fractal strings are fractals. Furthermore, the Cantor curve $($or ``devil's staircase''$)$ is fractal in this new sense, whereas $($contrary to everyone's geometric intuition$)$, it is not fractal in the sense of Mandelbrot's definition $[$Man$]$ $($according to which the Hausdorff dimension must be strictly greater than the topological dimension$)$. We refer to the definition of fractality provided in $[$Lap-vFr3, \S 12.1 \& 12.2$].$ We note that in $[$Lap-vFr3, \S 13.4.3$]$, the above definition of fractality is extended to allow for the zeta function to have a natural boundary along some suitable curve $($i.e., not to have a meromorphic continuation beyond that curve$);$ such objects are now called {\em hyperfractals} in $[$LapRa\u Zu1--5$]$. Recently, in $[$LapRa\u Zu1--2$]$, the authors have constructed {\em maximally hyperfractal} strings $($and also compact subsets of $\mathbb{R}^N$, for any $N \geq 1)$, in the sense that the geometric $($or fractal$)$ zeta functions have singularities at every point of the vertical line $\{\Res = D \}$. \\

\tab $(c)$ The mathematical theory of complex dimensions of fractal strings has many applications to a variety of subjects, including fractal geometry, spectral geometry, number theory, arithmetic geometry, geometric measure theory, probability theory, dynamical systems, and mathematical physics. See, for example, $[$ChrIvLap, EllLapMaRo, HamLap, Fal2, HeLap, HerLap1--5, LalLap1--2, Lap1--5 \& Lap7--9, LapL\' eRo, LapLu1--3, LapLu-vFr1--2, LapMai1--2, LapNe, LapPe1--3, LapPeWi1--2, LapPom1--3, LapRa\u Zu1--5, LapRo, LapRo\u Zu, L\' eMen, MorSepVi, Pe, PeWi, Ra, Tep, \u Zu$]$, along with the three monographs $[$Lap-vFr1--3$]$ $($and the relevant references therein$)$ and the author's book $[$Lap6$]$. We note that recent developments in the theory are described in $[$Lap-vFr3, Ch. 13$]$, including a first attempt at a higher-dimensional theory of complex dimensions for the special case of fractal sprays $($in the sense of $[$LapPom3$])$ and self-similar tilings $($see $[$Lap-vFr3, \S 13.1$]$, based on $[$LapPe2--3, LapPeWi1--2, PeWi$])$, $p$-adic fractal strings and associated nonarchimedean fractal tube formulas $($see $[$Lap-vFr3, \S 13.2$]$, based on $[$LapLu1--3, LapLu-vFr1--2$]),$ multifractal zeta functions and their ``tapestries'' of complex dimensions $($see $[$Lap-vFr3, \S 13.3$],$ based on $[$LapRo, LapL\' eRo, EllLapMaRo$])$, random fractal strings $($such as stochastically self-similar strings and the zero-set of Brownian motion$)$ and their spectra $($see $[$Lap-vFr3, \S 13.4$]$, based on $[$HamLap$])$, as well as a new approach to the Riemann hypothesis based on a conjectural Riemann flow of fractal membranes $($i.e., quantized fractal strings$)$ and correspondingly flows of zeta functions $($or ``partition functions''$)$ and of the associated zeros $($see $[$Lap-vFr3, \S 13.5$]$, which gives a brief overview of the aforementioned book $[$Lap6$],$ {\em In Search of the Riemann Zeros}$)$.\\

\tab $(d)$ Recently, in $[$LapRa\u Zu1--5$],$ the theory of complex dimensions of fractal strings developed in $[$Lap-vFr1--3$]$ has been extended to any bounded subset of Euclidean space $\mathbb{R}^N$ $($and even to ``relative fractal drums'' of $\mathbb{R}^N)$ for any $N \geq 1$, via the use of new ``fractal zeta functions'', namely, the ``distance zeta function'' (introduced by the author in 2009$)$ and the closely related ``tube zeta function''. For a comprehensive exposition of the resulting theory, we refer to the forthcoming book $[$LapRa\u Zu1$]$, along with the papers $[$LapRa\u Zu2--4$]$ and survey articles $[$LapRa\u Zu5,6$].$
\end{remark}

\section{Direct and Inverse Spectral Problems for Fractal Strings}

In the present section, we study direct (\S 3$(a)$) and inverse (\S 3$(b)$) spectral problems for fractal strings, based on the work in [LapPom1,2] and [LapMai1,2], respectively, and exploring in the one-dimensional case some of the ramifications of the work in \cite{Lap1} on a partial resolution of the Weyl--Berry conjecture for fractal drums (see the discussion surrounding Eq. \eqref{3.1} below); see also \cite{Lap7} for a recent and more detailed survey on this topic. In the process, we will be led naturally to establishing close connections between these problems and the values of the Riemann zeta function $\zeta = \zeta(s)$ in the {\em critical interval} $0 < s < 1$ (see \S 3$(a)$) as well as with the {\em critical zeros} (see \S 3$(b)$); that is, the zeros of $\zeta (s)$ located in the critical strip $0 < \Res < 1$.

\subsection{Minkowski measurability of fractal strings and spectral asymptotics with a monotonic second term.}

In \cite{LapPom2} (announced in \cite{LapPom1}), the author and Carl Pomerance have established  (in the case of fractal strings, that is, in dimension one) the modified Weyl--Berry conjecture (MWB conjecture) formulated by the author in \cite{Lap1} and according to which if {\em a fractal string} $\mcl = \{\ell_j \}_{j=1}^\infty$ {\em of Minkowski dimension} $D \in (0,1)$ is Minkowski measurable {\em with Minkowski content} $\mathcal{M}$ (see the discussion at the end of \S 2$(a)$ above), {\em then its spectral} (or {\em frequency}) {\em counting function} $N_\nu = N_{\nu, \mcl}$ {\em admits a} monotonic {\em asymptotic second term, of the form} $-c_D \mathcal{M} x^D$, {\em where} $c_D$ {\em is a positive constant depending only on} $D$: 

\begin{align}\label{3.1}
N_\nu (x) = W(x) - c_D \mathcal{M} x^D + o (x^D), \text{ as } x \rightarrow + \infty,
\end{align}
where the Weyl (or leading) term $W$ is given by $W(x) := |\Omega| x,$ with $|\Omega| = \sum_{j=1}^\infty \ell_j$ being the ``volume'' (or really, the ``length'') of any geometric realization $\Omega$ of $\mcl$.

\begin{theorem}[Resolution of the MWB conjecture for fractal strings, \cite{LapPom2}]\label{T3.1}
If $\mcl$ is a Minkowski measurable fractal string of dimension $D \in (0,1)$, then \eqref{3.1} holds, with the Weyl term given by $W(x) := |\Omega| x$ as above and the positive constant $c_D$ given by $c_D := (1-D) 2^{D-1} (-\zeta(D))$.
\end{theorem}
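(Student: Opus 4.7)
I would proceed in three stages: reduce the spectral statement to a fractional-parts sum over the lengths, characterize Minkowski measurability at the level of the length sequence, and extract the two main terms via a Tauberian argument driven by the product formula $\zeta_\nu = \zeta\cdot\zeta_{\mcl}$. For the first stage, since the normalized-frequency spectrum of the interval $I_j = (0,\ell_j)$ has counting function $\lfloor \ell_j x\rfloor$, summing over the disjoint components of $\Omega$ yields the elementary identity
\begin{align*}
N_\nu(x) \;=\; \sum_{j=1}^{\infty}\lfloor \ell_j x\rfloor \;=\; |\Omega|\,x \;-\; \sum_{j=1}^{\infty}\{\ell_j x\},
\end{align*}
both series being absolutely convergent (the second because $\{\ell_j x\} = \ell_j x$ for all $j$ large). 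This already identifies the Weyl term $W(x) = |\Omega|x$ and reduces the theorem to proving $\sum_j\{\ell_j x\} = c_D\mathcal M x^D + o(x^D)$ as $x\to+\infty$.

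For the second stage, I would establish the Besicovitch--Taylor-type equivalence: $\mcl$ is Minkowski measurable with content $\mathcal M$ if and only if $\ell_j \sim L\,j^{-1/D}$ as $j\to\infty$, where $L^D = (1-D)\,2^{D-1}\mathcal M$. This rests on the decomposition
\begin{align*}
V(\varepsilon) \;=\; 2\varepsilon\,N_{\mcl}\!\left(\tfrac{1}{2\varepsilon}\right) \;+\; \sum_{j:\,\ell_j<2\varepsilon}\ell_j,
\end{align*}
combined with a Karamata-type transfer of regular variation between the two regimes and the monotonicity of $(\ell_j)$.

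For the third stage, the length asymptotic $\ell_j\sim L j^{-1/D}$ implies that $\zeta_{\mcl}$, holomorphic on $\{\Res > D\}$, has a simple pole at $s=D$ of residue $D L^D$, while $\zeta_{\mcl}(1) = |\Omega|$ is finite; hence $\zeta_\nu = \zeta\cdot\zeta_{\mcl}$ has simple poles exactly at $s=1$ (residue $|\Omega|$, inherited from $\zeta$) and at $s=D$ (residue $D L^D\zeta(D)$, inherited from $\zeta_{\mcl}$). An Ikehara--Delange-type Tauberian theorem---or, equivalently, a direct analysis of $N_\nu(x) = \sum_n N_{\mcl}(x/n)$ obtained by splitting the sum at $n\asymp Lx$ and using the length asymptotic in the head and in the tail---then yields
\begin{align*}
N_\nu(x) \;=\; |\Omega|\,x \;+\; L^D\zeta(D)\,x^D \;+\; o(x^D).
\end{align*}
Substituting $L^D = (1-D)\,2^{D-1}\mathcal M$ and using that $\zeta(D)<0$ on $(0,1)$ converts the second term into $-c_D\mathcal M x^D$ with $c_D = (1-D)\,2^{D-1}(-\zeta(D))$, which is the desired identity.

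The hard part is clearly the Tauberian step in the second stage: extracting the sharp first-order asymptotic $\ell_j\sim L j^{-1/D}$ from a purely geometric measurability hypothesis requires a two-sided Tauberian argument, since the two contributions to $V(\varepsilon)$ mix scales and each piece alone provides only one-sided information. Given that equivalence, the third stage is essentially a clean application of a classical Tauberian theorem, made possible by $\zeta$ having, on the segment $D < s \leq 1$, no singularity other than its simple pole at $s=1$.
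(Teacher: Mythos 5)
Your overall architecture coincides with the proof in [LapPom2] as summarized in the paper: reduce to the identity $N_\nu(x)=\sum_{j}[\ell_j x]=|\Omega|x-\sum_j\{\ell_j x\}$, prove the characterization ``$\mcl$ Minkowski measurable $\Leftrightarrow \ell_j\sim Lj^{-1/D}$ with $\mathcal{M}=2^{1-D}L^D/(1-D)$'' (you correctly flag this as the hard, two-sided Tauberian step), and then extract the second term $\zeta(D)L^Dx^D$. Your constants are consistent: $L^D\zeta(D)=-c_D\mathcal{M}$ with $c_D=(1-D)2^{D-1}(-\zeta(D))$.

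The gap is in your third stage as you have framed it. The hypothesis $\ell_j\sim Lj^{-1/D}$ does \emph{not} imply that $\zeta_{\mcl}$ has a meromorphic continuation to a neighborhood of $s=D$ with a simple pole of residue $DL^D$; that holds for the exact model $\ell_j=Lj^{-1/D}$ (where $\zeta_{\mcl}(s)=L^s\zeta(s/D)$), but a mere asymptotic equivalence only controls the behavior of $\zeta_{\mcl}(s)$ as $s\to D^+$ along the reals, and perturbations of the lengths by lower-order terms can produce non-polar singularities at $s=D$ and on the line $\{\Res=D\}$. Consequently the input needed for an Ikehara--Delange argument is not available; and even where it is, classical one-term Tauberian theorems recover only the leading term $|\Omega|x$ --- a secondary term of order $x^D$ with $D<1$ is precisely the kind of information such theorems cannot see without strong additional continuation-plus-growth hypotheses. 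The step must instead be done by the elementary route you mention in passing: estimate $\sum_j\{\ell_j x\}$ directly from $\ell_j\sim Lj^{-1/D}$ (splitting at $j\asymp (Lx)^D$ and comparing with the model sum $\sum_j\{Lxj^{-1/D}\}$), where the constant $-\zeta(D)L^D$ emerges from the representation $\zeta(s)=\frac{1}{s-1}+\int_0^{+\infty}([t]^{-s}-t^{-s})\,dt$ valid for $\Res>0$ --- which is exactly how [LapPom2] proceeds, as the paper indicates. With that substitution your proof plan is sound.
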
  

\tab Note that $c_D > 0$ because the Riemann zeta function is negative in the {\em critical interval} $(0,1)$ (see, e.g., \cite{Ti}): $\zeta (D) < 0$ since $D \in (0,1)$. The proof of Thm. 3.1 provided in \cite{LapPom2} is given in two different steps. The first step consists in proving the following key characterization of Minkowski measurability for fractal strings (i.e., in one dimension), due to \cite{LapPom2} and also of independent geometric interest:  \\

\tab {\em The fractal string} $\mcl = \{\ell_j\}_{j=1}^\infty$ ({\em of Minkowski dimension} $D \in (0,1))$ {\em is Minkowski measurable if and only if} $\ell_j \sim Lj^{-1/D}$ {\em as} $j \rightarrow \infty$, {\em for some} $L \in (0, + \infty)$; i.e., iff $L := \lim_{j \rightarrow \infty} \ell_j \cdot j^{1/D}$ {\em exists in} $(0, + \infty)$. {\em In that case, its Minkowski content $\mathcal{M}$ is given by} $\mathcal{M} = 2^{1-D} L^D/(1-D)$. \\

\tab The second step of the proof of Thm. \ref{T3.1}, also due to \cite{LapPom2}, consists in first observing that (since the intervals or substrings of which the fractal string $\Omega$ is comprised are vibrating independently of one another) $N_\nu (x) = \sum_{j=1}^\infty [\ell_j x]$, where $[y]$ is the integer part of $y \in \mathbb{R}$, and then establishing the following ({\em monotonic}) {\em two-term asymptotic expansion holds}: If $\ell_j \sim L j^{-1/D}$ as $j \rightarrow \infty$, then $\sum_{j=1}^\infty [\ell_j x] = (\sum_{j=1}^\infty \ell_j)x + \zeta (D) x^D + o (x^D),$ as $x \rightarrow + \infty$. (Observe that the leading term in this formula coincides  with the Weyl term, $W(x) = |\Omega| x$.) In the proof given in [LapPom1,2], one uses, in particular, the following form of the analytic continuation of $\zeta = \zeta(s)$ to the open right half-plane $\{\Res > 0 \}$ (see, e.g., [Ser, \S VI.3.2]): $\zeta(s) = \frac{1}{s-1} + \int_0^{+ \infty} ([t]^{-s} - t^{-s}) dt,$ for $\Res > 0$. Note that on the right-hand side of this identity, the first term yields the single, simple pole of $\zeta$ at $s=1$, while the second term defines a holomorphic function of $s$ for $\Res > 0$. This identity also plays an important role in [HerLap1,5]; see \S 4$(c)$ below.\label{F3.1} We mention that later proofs of a part of Step 1 above were provided in \cite{Fal2} and in \cite{RatWi}, using different techniques from the theory of dynamical systems or from geometric measure theory, respectively.

\begin{remark}\label{R3.2}
$(a)$ The original Weyl conjecture was stated $($in 1912$)$ for ``nice'' smooth $($or piecewise smooth$)$ bounded domains $\Omega \subseteq \mathbb{R}^N (N \geq 1)$ in $[$We$],$ where the mathematician Hermann Weyl also obtained his classic formula  $($see, e.g., $[$CouHil, Kac$]$ or $[$ReSi, vol. IV$])$ for the leading spectral asymptotics of the Dirichlet $($or Neumann$)$ Laplacian on $\Omega$ in the case of a $($piecewise$)$ smooth boundary $\partial \Omega$. The original and stimulating Berry conjecture was stated $($in the late 1970s$)$ by the physicist Michael Berry in $[$Berr1,2$]$ for bounded open sets $\Omega \subseteq \mathbb{R}^N$ with ``fractal'' boundary. It was expressed in terms of the Hausdorff dimension and measure of the boundary, and in that form, was disproved in $[$BroCar$];$ see also $[$Lap1$],$ for general mathematical reasons and a simple family of counterexamples, valid in any dimension. A partial resolution of the Weyl--Berry conjecture for fractal drums was obtained by the author in $[$Lap1$],$ where it was proved that for the Dirichlet $($as well as, under suitable hypotheses, the Neumann$)$ Laplacian on a bounded open set $\Omega \subseteq \mathbb{R}^N$ $($for $N \geq 1)$ with boundary $\partial \Omega$ satisfying $D = D (\partial \Omega) \in (N-1, N)$ and $\mathcal{M}^* < \infty$, we have
\begin{align}\label{3.6}
N_\nu (x) = W(x) + O(x^D), \ \textnormal{ as } \ x \rightarrow +\infty,
\end{align} 
where $W(x) := \mathcal{C}_N |\Omega|_N x^N $ is the $N$-dimensional Weyl term and $|\Omega|_N $ is the $N$-dimensional volume of $\Omega$. Here, $D$ notes the (upper) Minkowski dimension of $\partial \Omega$ and, as noted in $[$Lap1$]$, we always have $D \in [N-1, N]$. The error estimate in \eqref{3.6} was also shown to be sharp for every $N \geq 1$ and every $D \in (N-1, N);$ see $[$Lap1, Examples 5.1 and 5.1'$]$. In the nonfractal case when $D = N-1$, the error term in the counterpart of \eqref{3.6} then involves a logarithm term; see $[$Lap1$]$ and the earlier work by G. M\' etivier in $[$Met$].$ For further results concerning the $($modified$)$ Weyl--Berry conjecture and discussion of its many physical applications $[$Berr1--2$],$ we refer the interested reader to $[$Lap1--3$],$ $[$Lap-vFr3, \S 12.5 and App. B$]$, $[$LapRa\u Zu1,5$]$ and $[$Lap7, \S 4.1$],$ along with the relevant references therein, including $[$BroCar, FlVas, Ger, GerSchm, HamLap, HeLap, LapPom1--3, LapMai1--2, MolVai, Sch$].$ $($We note that the analog of the MWB conjecture is not true, in general, when $N \geq 2$, although no counterexample seems to be known for simply connected planar domains$;$ see $[$FlVas$]$ and, especially, $[$LapPom3$].)$ For the case of ``drums with fractal membrane" $($Laplacians on fractals, $[$Ki$]$, rather than on open sets with fractal boundary, $[$Lap1--3$]),$ we refer to $[$KiLap1$]$ and $[$Lap3$]$ along with, e.g., $[$RamTo, Ham, KiLap2, Sab, ChrIvLap, LapSar, Tep, LalLap1--2$]$ and the relevant references therein. \\

\tab $(b)$ It is shown in $[$LapPom2$]$ that a fractal string $($of dimension $D \in (0,1))$ is Minkowski nondegenerate $($i.e., $0 < \mathcal{M}_* (\leq) \mathcal{M}^* < \infty)$ if and only if $\ell_j \approx j^{-1/D}$ as $j \rightarrow \infty$ and if and only if $\varphi (x) := | N_\nu (x) - W(x) | \approx x^D$ as $x \rightarrow + \infty$, where the symbol $\approx$ means that we have two-sided error estimates$;$ e.g., $\alpha x^D \leq \varphi (x) \leq \beta x^D$, for some constants $\alpha, \beta > 0$. We will next see $($in \S 3$(b)$ just below$)$ that the situation is very different when ``Minkowski nondegeneracy'' is replaced by ``Minkowski measurability''. 
\end{remark}

\subsection{The sound of fractal strings and the Riemann zeros.}

In \S 3$(a)$ just above, we have studied a {\em direct spectral problem for fractal strings} since in Thm. \ref{T3.1} based on \cite{LapPom2} (as well as in the corresponding MWB conjecture from \cite{Lap1}), we assume something about the geometry of the fractal string $\mcl$ (namely, the Minkowski measurability of $\mcl$) and deduce some information about the spectrum of $\mcl$ (namely, that $N_\nu (x)$ admits a monotonic asymptotic second term). Conversely, it is natural to consider (as was done in [LapMai1,2]) the following associated inverse spectral problem: \\

(ISP)$_D$ {\em If the fractal string} ({\em of Minkowski dimension} $D \in (0,1)$) {\em is such that its spectral counting function} $N_\nu (x)$ {\em admits a} monotonic {\em asymptotic second term as} $x \rightarrow + \infty$, {\em proportional to} $x^D$ (i.e., of the form $\mathcal{C} x^D$, where the nonzero constant $\mathcal{C}$ depends only on $D$ and $\mathcal{L}$, compare with \eqref{3.1} above), {\em is it true that} $\mcl$ {\em is necessarily Minkowski measurable}? \\

\tab As was shown by the author and Helmut Maier in \cite{LapMai2} (announced in \cite{LapMai1}), it turns out that this question (\`a la Mark Kac, but of a very different nature than in \cite{Kac}) ``{\em Can one hear the shape of a fractal string}?'' is intimately related to the existence of critical zeros of $\zeta = \zeta(s)$ on the vertical line $\{\Res = D \}$, and therefore to the Riemann hypothesis. {\em In the sequel, we will say that the inverse spectral problem} (ISP)$_D$ {\em has an affirmative answer for a given value of $D \in (0,1)$ if the above fractal strings version of Kac's question has an affirmative answer for all fractal strings of dimension $D$.} Equivalently, (ISP)$_D$ has a {\em negative answer} for some given $D \in (0,1)$ if there exists a non Minkowski measurable fractal string $\mcl$ of (Minkowski) dimension $D$ such that its spectral counting function $N_\nu = N_{\nu, \mcl}$ has a monotonic asymptotic second term (proportional to $x^D$). Note, however, that in light of the result of \cite{LapPom2} discussed in Remark \ref{R3.2}($b$), such a fractal string $\mcl$ must be Minkowski nondegenerate.

\begin{theorem}[Riemann zeros and the inverse problem (ISP)$_D$, \cite{LapMai2}]\label{T3.4}
Fix $D \in (0,1).$ Then, the inverse spectral problem $($ISP$)_D$ has an affirmative answer for this value of $D$ if and only if $\zeta = \zeta(s)$ does not have any zeros on the vertical line $\{\Res = D \}$. 
\end{theorem}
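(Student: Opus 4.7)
The plan is to exploit the product formula $\zeta_\nu(s) = \zeta(s) \cdot \zeta_{\mathcal{L}}(s)$ from \eqref{2.8} to transfer asymptotic information between the geometric counting function $N_{\mathcal{L}}$ and the spectral counting function $N_\nu$. The monotonicity hypothesis in (ISP)$_D$ is what unlocks Tauberian theorems of Wiener--Ikehara type, which otherwise would deliver only Ces\`aro averages. The two directions of the equivalence are then attacked in turn.

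For the implication ``$\zeta$ nonvanishing on $\{\Res = D\} \Rightarrow$ (ISP)$_D$ holds,'' suppose $\mathcal{L}$ has Minkowski dimension $D$ and $N_\nu(x) = W(x) + \mathcal{C} x^D + o(x^D)$ with monotonic second term. Viewing $\zeta_\nu(s)/s$ as the Mellin transform of (a suitable regularization of) $N_\nu$, the hypothesis identifies its principal singular behaviour on the line $\{\Res = D\}$: a simple pole at $s = D$ with residue proportional to $\mathcal{C}$ and regularity elsewhere on the line. Dividing by $\zeta(s)$ --- nonvanishing on the entire line by hypothesis, and with $\zeta(D) < 0$ since $D \in (0,1)$ --- the same regularity transfers to $\zeta_{\mathcal{L}}(s)$: a simple pole at $D$, no other singularity on $\{\Res = D\}$. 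A Wiener--Ikehara Tauberian theorem applied to the nondecreasing function $N_{\mathcal{L}}$ yields $N_{\mathcal{L}}(x) \sim C' x^D$, hence $\ell_j \sim L j^{-1/D}$, and the Lapidus--Pomerance characterization recalled in Step~1 of the proof of Theorem~\ref{T3.1} concludes that $\mathcal{L}$ is Minkowski measurable.

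The converse is handled by contrapositive. Given a zero $\rho_0 = D + i t_0$ of $\zeta$ on $\{\Res = D\}$, I would construct a generalized Cantor-type self-similar string $\mathcal{L}$ whose geometric zeta function $\zeta_{\mathcal{L}}$ is meromorphic on $\mathbb{C}$ with a simple pole at $s = D$ and nonreal simple poles at $\rho_0$, $\overline{\rho_0}$ (and their translates by the oscillatory period, in analogy with the Cantor string's $\mathcal{D}_{CS} = \{D + i n \mathbf{p}\}_{n \in \mathbb{Z}}$ described in \S 2(d)). By the explicit tube formula \eqref{2.13}, the nonreal poles produce a genuinely oscillatory term of order $x^D$ in $V(\varepsilon)$, forcing $\mathcal{M}_* < \mathcal{M}^*$ so that $\mathcal{L}$ is not Minkowski measurable. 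Since $\zeta(\rho_0) = \zeta(\overline{\rho_0}) = 0$, the product $\zeta_\nu = \zeta \cdot \zeta_{\mathcal{L}}$ has those poles removed, leaving only $s = D$ to contribute to the second term of $N_\nu$; the corresponding spectral explicit formula then gives $N_\nu(x) = W(x) + \mathcal{C} x^D + o(x^D)$ with a pure-power, hence monotonic, second term.

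The main obstacle lies in the converse direction. Selecting the scaling ratios is easy when $\rho_0$ is essentially the only zero to be cancelled, but a Cantor-type construction naturally removes an entire arithmetic progression of would-be poles at $\{D + i n t_0\}_{n \in \mathbb{Z}}$; one must check that no other zero of $\zeta$ on $\{\Res = D\}$ conspires to reintroduce oscillations, or arrange the scaling so only the chosen zero is relevant. Equally delicate is the error analysis: the explicit formulas in \eqref{2.13} and its spectral analog are typically stated with $O$-type remainders, whereas (ISP)$_D$ demands a genuine $o(x^D)$ error \emph{after} the cancellation. Sharpening these remainders via quantitative Perron-type estimates --- and rigorously verifying that $\mathcal{M}_* < \mathcal{M}^*$ for the constructed string --- constitutes the technical core of the argument in [LapMai2].
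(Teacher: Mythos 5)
Your overall architecture is the one the paper indicates for [LapMai2] (see Remark \ref{R3.6}): a Wiener--Ikehara Tauberian argument for the implication ``$\zeta$ nonvanishing on $\{\Res = D\}$ $\Rightarrow$ (ISP)$_D$ affirmative,'' and an explicit construction of a non--Minkowski measurable string with monotonic spectral second term for the contrapositive of the converse. The first direction is essentially correct as you describe it: the monotonicity unlocks the Tauberian theorem, the nonvanishing of $\zeta$ on the \emph{entire} line is exactly what lets you divide $\zeta_\nu$ by $\zeta$ without introducing singularities on $\{\Res = D\}$, and the Lapidus--Pomerance characterization ($\ell_j \sim Lj^{-1/D}$ iff Minkowski measurable) closes the loop.

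The converse direction, however, contains a genuine gap, and it is not merely the technical obstacle you flag: the self-similar (Cantor-type) construction is structurally a dead end. A lattice self-similar string of dimension $D$ necessarily has complex dimensions along the \emph{full} arithmetic progression $\{D + in\mathbf{p} : n \in \mathbb{Z}\}$; if you tune $\mathbf{p} = t_0$ so that the poles at $D \pm it_0$ are cancelled by the zeros $\zeta(D \pm it_0) = 0$ in the product $\zeta_\nu = \zeta \cdot \zeta_{\mcl}$, the poles at $D + int_0$ for $|n| \geq 2$ remain uncancelled (one has no control over whether $\zeta$ vanishes there, and in fact $\zeta$ cannot vanish along an entire vertical arithmetic progression). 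Each surviving nonreal pole contributes a genuinely oscillatory term of order $x^D$ to $N_\nu$, so the constructed string fails to have a monotonic second term and is therefore not a counterexample to (ISP)$_D$ at all. The missing idea is to abandon self-similarity: in [LapMai2] the string is defined directly through its geometric counting function, essentially $N_{\mcl}(x) = [U(x)]$ with $U(x) = x^D(1 + \varepsilon \cos(t_0 \log x))$ for $\varepsilon > 0$ small enough that $U$ is increasing. Such a string has, in effect, only the three relevant ``complex dimensions'' $D$ and $D \pm it_0$; it is Minkowski nondegenerate but not Minkowski measurable (the single oscillation at frequency $t_0$ forces $\mathcal{M}_* < \mathcal{M}^*$, consistent with Remark \ref{R3.2}($b$)), while $\sum_n U(x/n)$ produces the coefficient $\zeta(D \pm it_0) = 0$ in front of the oscillatory terms of $N_\nu$, leaving a pure power $-\zeta(D)x^D$ (hence monotonic) second term. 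Your remaining concerns about sharpening $O(x^D)$ remainders to $o(x^D)$ are legitimate but secondary to this structural point.
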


\tab It follows from Thm. \ref{T3.4} that in the midfractal case when $D= 1/2$, the inverse problem (ISP)$_D$ has a negative answer since $\zeta$ has a zero on the critical line $\{\Res =1/2 \}$. (Actually, $\zeta(s)$ vanishes infinitely often for $\Res = 1/2$ but this result, due to Hardy, see \cite{Ti}, is not needed here. One zero, and its complex conjugate, which is also a zero of $\zeta$ since $\zeta (\overline{s}) = \overline{\zeta (s)}$, suffices.) Furthermore, any counterexample to the Riemann hypothesis would show that (ISP)$_D$ fails to have an affirmative answer for some $D \neq 1/2$. (None is expected, however.) Hence, we obtain the following result, which is really a corollary of Thm. \ref{T3.4} (in light of the functional equation satisfied by $\zeta = \zeta(s)$, which exchanges $s$ and $1-s$) but that we state as a theorem because it is the central result of [LapMai1,2] and is a key motivation for several of the main results stated in the later sections (\S4 and \S5).

\begin{theorem}[Riemann hypothesis and inverse spectral problems for fractal strings, \cite{LapMai2}]\label{T3.5}
The inverse spectral problem $($ISP$)_D$ has an affirmative answer for all $D \in (0,1)$ other than $D = 1/2$ $[$or, equivalently, for all $D \in (0, 1/2)$ or else, for all $D \in (1/2, 1)$, respectively$]$ if and only if the Riemann hypothesis is true.
\end{theorem}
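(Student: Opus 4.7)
The plan is to deduce Theorem~\ref{T3.5} as a fairly direct corollary of Theorem~\ref{T3.4}, leveraging the known symmetry of the critical zeros of $\zeta$ under $s \mapsto 1-s$. Theorem~\ref{T3.4} provides the pointwise equivalence ``(ISP)$_D$ holds iff $\zeta$ has no zero on $\{\Res = D\}$'', so quantifying over the relevant range of $D$ will convert the statement about inverse spectral problems into a statement about zero-free vertical lines, which is then tautologically close to the Riemann hypothesis.

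First, for the forward implication, assume RH holds. Then every critical zero of $\zeta$ lies on the line $\{\Res = 1/2\}$, so for any $D \in (0,1)$ with $D \neq 1/2$ the vertical line $\{\Res = D\}$ is zero-free (it also avoids the trivial zeros $-2, -4, \dots$ and the pole at $s=1$). Applying Theorem~\ref{T3.4} for each such $D$ yields that (ISP)$_D$ has an affirmative answer for every $D \in (0,1) \setminus \{1/2\}$.

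Next, for the converse, suppose (ISP)$_D$ has an affirmative answer for all $D \in (0,1)$ with $D \neq 1/2$. By the contrapositive direction of Theorem~\ref{T3.4}, $\zeta$ then has no zero on any vertical line $\{\Res = D\}$ with $D \in (0,1) \setminus \{1/2\}$. But any hypothetical critical zero $\rho = \beta + i\gamma$ violating RH would satisfy $0 < \beta < 1$ and $\beta \neq 1/2$; taking $D = \beta$ would then contradict the previous sentence. Hence every critical zero has $\Res = 1/2$, which is RH. Combined with the forward direction, this proves the equivalence in the form ``(ISP)$_D$ for all $D \in (0,1) \setminus \{1/2\}$ $\iff$ RH''.

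It remains to explain why the same equivalence holds when $D$ is quantified only over $(0,1/2)$ or only over $(1/2,1)$. This uses the functional equation $\xi(s)=\xi(1-s)$, which implies that the set of critical zeros of $\zeta$ is invariant under the reflection $s \mapsto 1-s$. Consequently, $\{\Res = D\}$ contains a zero of $\zeta$ if and only if $\{\Res = 1-D\}$ does, so the zero-free condition on the families $\{D : 0 < D < 1/2\}$ and $\{D : 1/2 < D < 1\}$ are equivalent, and each is equivalent to the full condition on $(0,1) \setminus \{1/2\}$. Applying Theorem~\ref{T3.4} one more time in either half-interval closes the chain of equivalences. There is no serious obstacle here beyond Theorem~\ref{T3.4} itself: all the hard analytic work (the two-sided spectral analysis of fractal strings tying Minkowski measurability to zero-free vertical lines of $\zeta$) is already encapsulated in \cite{LapMai2}, and the present theorem is essentially the logical repackaging of that result using the symmetry $s \leftrightarrow 1-s$ of the critical zeros.
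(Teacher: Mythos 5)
Your proposal is correct and matches the paper's own route: the paper explicitly presents Theorem~\ref{T3.5} as a corollary of Theorem~\ref{T3.4}, obtained by quantifying over $D$ and invoking the functional equation $\xi(s)=\xi(1-s)$ to justify the equivalence of the three quantification ranges. Nothing essential is missing.
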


\tab Thm. \ref{T3.5} provides a spectral and geometric reformulation of the Riemann hypothesis. For the purpose of this paper, we will refer to Thm. \ref{T3.5} (as well as to Thm. \ref{T4.7}, a corresponding operator theoretic version of Thm. \ref{T3.5} obtained in [HerLap1--3] and discussed in \S 4$(d)$ below) as a {\em symmetric criterion for the Riemann hypothesis} (RH). Obtaining an {\em unsymmetric criterion for} RH will be the object of \S5.

\begin{remark}\label{R3.6}
The proof of Thm. \ref{T3.5} relies on the Wiener--Ikehara Tauberian theorem $[$Pos$]$ $($for one of the implications$)$ and $($for the reverse implication) on the intuition of the notion of complex dimension which, at the time, was still conjectural $($see $[$LapPom1--2, Lap2--3$])$. On the other hand, Thm. \ref{T3.5}, along with results and conjectures in $[$Lap1--3$],$ $[$LapPom1--3$],$ $[$KiLap1$]$ and $[$HeLap$]$, in particular, provided some of the key motivations for developing in $[$Lap-vFr1--3$]$ a rigorous theory of complex $($fractal$)$ dimensions via $($generalized$)$ explicit formulas and geometric zeta functions. We also refer to $[$Lap-vFr1--3$]$, especially $[$Lap-vFr3, Ch. 9$]$, where Thm. \ref{T3.5} is reformulated in terms of explicit formulas and extended to all arithmetic zeta functions for which the generalized Riemann hypothesis (GRH), is expected to hold. For information about GRH, see, e.g., $[$ParSh$]$, $[$Sarn$]$, $[$Lap6, Apps. B, C \& E$]$ or $[$Lap-vFr3, App. A$]$.
\end{remark}

\section{The Spectral Operator as a Quantized Riemann Zeta Function}

In this section, we give a brief overview of the work of Hafedh Herichi and the author on aspects of quantized number theory. This work is presented in the series of research and survey articles [HerLap2--5], and described in detail in the forthcoming book, \cite{HerLap1}, titled {\em Quantized Number Theory, Fractal Strings and the Riemann Hypothesis}: {\em From Spectral Operators to Phase Transitions and Universality.} More specifically, in [HerLap1--5], the main object of investigation is the spectral operator, sending the geometry onto the spectrum of a fractal string. Originally seen as acting on the space of (generalized) fractal strings, it is now viewed mathematically (and physically) as a suitable quantization of the Riemann zeta function: $\mfa = \zeta (\partial)$, where $\partial$ is a suitable version of the differentiation operator $d/dt$, called the infinitesimal shift of the real line and acting on the Hilbert space $\mathbb{H}_c = L^2 (\mathbb{R}, e^{-2ct} dt)$, a weighted $L^2$-space; see \S 4$(b)$. Accordingly, the spectral operator $\mfa = \mfa_c$ depends on a parameter $c$ which can be thought of heuristically as providing an upper bound for the Minkowski dimensions of the fractal strings on which the operator $\mfa$ acts. In fact, as we shall see in \S 4$(a)$, it is more convenient to replace the fractal strings themselves by their associated (geometric and spectral) counting functions. As it turns out, the spectral operator, $\mfa = \zeta (\partial)$, which was originally introduced at the semi-heuristic level in [Lap-vFr2, \S 6.3.1 \& \S 6.3.2] (see also [Lap-vFr3, \S 6.3.1 \& \S 6.3.2]) satisfies (at the operator theoretic level) most of the properties of the classic Riemann zeta function $\zeta = \zeta (s)$, including a quantized Dirichlet series, a quantized Euler product and an operator-valued ``analytic continuation''. (See \S 4$(c)$, along with [HerLap1, Ch. 7] and [HerLap5].) One of the key motivations of the work in [HerLap1--5] is to obtain a rigorous functional analytic version of the work of the author and H. Maier in [LapMai2] on inverse spectral problems for fractal strings and the Riemann hypothesis briefly described in \S3$(b)$ above. This goal is achieved (see \S 4$(d)$ below) by studying the quasi-invertibility of the spectral operator as a function of the parameter $c$ in the critical interval $(0,1)$, which is the natural range of possible dimensions of fractal strings, leaving aside the least and most fractal cases. In this light, the Riemann hypothesis is true if and only if the midfractal case $c = 1/2$ is the only exception to the quasi-invertibility of $\mfa$; see \S 4$(d)$. 

\subsection{Heuristic spectral operator.} As we have seen in \S 2$(c)$, the analog (at the level of the counting functions) of the factorization formula \eqref{2.8}, connecting the geometric and the spectral zeta functions of a fractal string $\mcl$, via the Riemann zeta function $\zeta$, is provided by the following formula, connecting the geometric and spectral counting functions $N_{\mcl}$ and $N_\nu$, respectively, of a given fractal string $\mcl$: $N_\nu (x) = \sum_{n=1}^\infty N_\mcl (x/n),$ for all $x > 0$. (Note that for a fixed $x > 0$, only finitely many terms contribute to this sum; however, the number of these terms tends to infinity as $x \rightarrow + \infty$.) The (heuristic) spectral operator, at the level of the counting functions, is then given by the map $g(x) = N_\mcl (x) \mapsto N_\nu (g) (x) = \sum_{n=1}^\infty g (x/n).$ Hence, it can be thought of as sending the geometry onto the spectrum of a fractal string $\mcl$. At an even more fundamental level, and using the notation of \S 2$(d)$, the heuristic spectral operator can be thought of as being the map $\eta := \sum_{j=1}^\infty \delta_{\ell_j^{-1}} \mapsto \nu = \nu (\eta) := \sum_{f \in \sigma (\mcl)} \delta_f$, where $\sigma (\mcl)$ denotes the (frequency) spectrum of $\mcl$ (given at the beginning of \S 2$(c)$) and the measures $\eta$ and $\nu$ respectively represent the geometry and the spectrum of $\mcl$, viewed as generalized fractal strings (in the sense of [Lap-vFr3, Ch. 4]). (For a brief discussion of the spectral operator expressed in terms of the complex dimensions of the fractal string $\mcl$, see the text immediately preceding Remark \ref{R2.1} in \S 2$(d)$ about the densities of geometric and of spectral states of $\mcl$; see also [Lap-vFr2--3, \S 6.3.1].)\\

\tab The spectral operator was first introduced (in this semi-heuristic context) in [Lap-vFr2, \S 6.3.2] (see also [Lap-vFr3, \S 6.3.2]). The above version is referred to in [HerLap1--5] as the multiplicative version of the (heuristic) spectral operator. From now on, we will only work with the (equivalent) additive version, which we next describe. \\

\tab Viewed additively (that is, after having made the change of variable $x= e^t, t = \log x$, with $x > 0$ and $t \in \mathbb{R}$), we obtain the additive spectral operator $\mfa$, to be briefly referred to henceforth as the (heuristic) ``{\em spectral operator}'':

\begin{align}\label{4.4}
f(t) \mapsto \mfa(f)(t) = \sum_{n=1}^\infty f (t- \log n),
\end{align}
where the functions $f= f(t)$ are viewed as functions of the variable $t \in \mathbb{R} = (-\infty, +\infty).$ (The precise mathematical setting will be specified in \S 4$(b)$.) Given a prime $p$ (i.e., $p \in \mathcal{P}$, where $\mathcal{P}$ denotes the set of all prime numbers), the associated {\em local} ({\em operator-valued}) {\em Euler factor} $\mfa_p$ is given by $f(t) \mapsto \sum_{m=0}^\infty f (t -m \log p)$. Formally, the spectral operator $\mfa$ is connected to its local Euler factors $\mfa_p$ (with $p \in \mathcal{P})$ via the following {\em Euler product representation}: $\mfa = \prod_{p \in \mathcal{P}} \mfa_p$, where the infinite product is to be understood in the sense of the composition of operators. Next, assuming for now that the function $f$ is infinitely differentiable, we formally define the differentiation operator $\partial = d/dt$ (so that $\partial f = f'$, the derivative of $f, \cdots, \partial^k f = f^{(k)}$, the $k$-th derivative of $f$, for any integer $k \geq 0$). Then, $e^{-h \partial}$ acts a shift or translation operator. Namely, $f(t-h) = (e^{-h \partial})(f)(t).$ \\

\tab Let us close this discussion by giving the heuristic motivations for the above Euler product formula and for the following Dirichlet series representation of $\mfa$, which will serve as the basis for our rigorous discussion in \S 4$(b)$ and \S 4$(c)$: $\mfa = \sum_{n=1}^\infty n^{-\partial} = \zeta (\partial)$. In light of the above discussion, we have for any prime $p \in \mathcal{P}$, $\mfa_p = \sum_{m=0}^\infty e^{-m (\log p) \partial} = \sum_{m=0}^\infty (p^{-\partial})^m = (1-p^{-\partial})^{-1}$ and so $\mfa = \sum_{n=1}^\infty e^{-(\log n)\partial} = \sum_{n=1}^\infty (n^{-\partial}) = \zeta (\partial),$ as desired. All of these formulas should be verified by applying the corresponding expression to the function $f$ and evaluating at a generic $t \in \mathbb{R}$. In actuality, the rigorous justification of these heuristic formulas requires a lot more work and is provided (under suitable assumptions) in [HerLap1--5], as we next briefly discuss in \S 4$(b)$ and \S 4$(c)$.

\subsection{The infinitesimal shift $\partial$ and the spectral operator $\mfa = \zeta(\partial)$.}

Given $c \in \mathbb{R}$, let $\mathbb{H}_c = L^2(\mathbb{R}, e^{-2ct} dt)$ be the complex Hilbert space of (complex-valued, Lebesgue measurable) square-integrable functions with respect to the absolutely continuous measure $\mu (dt) = e^{-2ct} dt$. The inner product of $\mathbb{H}_c$ is given (for $f, g \in \mathbb{H}_c$) by $<f,g>_c := \int_{-\infty}^{+ \infty} e^{-2ct} f(t) \overline{g}(t)dt$ and thus, the associated norm $||\cdot||_c$ is naturally given by $||f||_c^2 = \ <f,f>_c \ = \int_{-\infty}^{+\infty} e^{-2ct} |f(t)|^2 dt < \infty.$ \\

\tab Let $C_{abs}(\mathbb{R})$ denote the space of (locally) absolutely continuous functions on $\mathbb{R}$; see, e.g., \cite{Foll} or \cite{Ru1}. Let $\partial = \partial_c = d/dt$ denote the differentiation operator, acting on $\mathbb{H}_c$ and with domain $D(\partial) = \{f \in \mathbb{H}_c \cap C_{abs} (\mathbb{R}): f' \in \mathbb{H}_c \}$, where $f'$ denotes the distributional (or weak) derivative of $f$ (see, e.g., [Bre, JoLap, Ru2, Schw]), which (since $f$ is absolutely continuous) can be interpreted as the usual pointwise derivative of $f$ existing (Lebesgue) almost everywhere ($a.e.$, in short) on $\mathbb{R}$. Then, for every $f \in D(\partial)$, we have $\partial f := f'$. The operator $\partial$ so defined is called the {\em infinitesimal shift} of the real line. (Note that it depends on the choice of the parameters $c \in \mathbb{R}$.)

\begin{theorem}[\cite{HerLap1}]\label{T4.1}
For every $c \in \mathbb{R}$, the infinitesimal shift $\partial = \partial_c$ is an unbounded normal operator on $\mathbb{H}_c$, with spectrum $\sigma (\partial)$ given by $\sigma (\partial) = \{\Res = c \} = c + i \mathbb{R}$, the vertical line of abscissa $c$. $($Here, $i := \sqrt{-1}.)$ Furthermore, its adjoint $\partial^*$ is given by $\partial^* = 2c-\partial ;$ so that we also have $\sigma (\partial^*) = \{\Res = c \}$. In addition, neither $\partial$ {\em nor} $\partial^*$ has any eigenvalues $($hence, its point spectrum is empty$).$
\end{theorem}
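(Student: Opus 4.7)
The plan is to reduce the spectral analysis of $\partial = \partial_c$ on the weighted space $\mbh_c$ to the well-understood unweighted differentiation operator on $L^2(\mbr)$ by means of an explicit unitary equivalence, and then to transport the resulting assertions back.

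First, I would introduce the multiplication map
\[ U : \mbh_c \longrightarrow L^2(\mbr, dt), \qquad (Uf)(t) := e^{-ct} f(t). \]
The identity $\|Uf\|_{L^2}^2 = \int_{\mbr} e^{-2ct}|f(t)|^2\,dt = \|f\|_c^2$ shows that $U$ is an isometric isomorphism, with inverse $(U^{-1}g)(t) = e^{ct}g(t)$. A direct application of the Leibniz rule (for weak, equivalently a.e.\ pointwise, derivatives of absolutely continuous functions) shows that for $f \in D(\partial)$ and $g := Uf$, one has $f' = e^{ct}(cg + g')$, and therefore
\[ U\,\partial\, U^{-1} \;=\; c\,I + D, \qquad D := d/dt \ \text{ on } \ L^2(\mbr), \]
with the domains corresponding exactly: $f \in D(\partial)$ if and only if $g \in D(D) = H^1(\mbr)$.

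Second, I would invoke the standard analysis of $D = d/dt$ on $L^2(\mbr)$ via the Fourier transform $\mathcal{F}$: the identity $\mathcal{F} D \mathcal{F}^{-1} = M_{i\xi}$, i.e.\ multiplication by $i\xi$ on $L^2(\mbr, d\xi)$, shows that $D$ is skew-adjoint (in particular normal, with $D^* = -D$), has spectrum $\sigma(D) = i\mbr$ (the essential range of the symbol $i\xi$), and has empty point spectrum (no nonzero $L^2$-function is supported at a single point). Since normality, adjoints, spectra, and point spectra are all preserved under unitary equivalence, translating by the real scalar $c$ yields immediately that $\partial$ is normal with $\sigma(\partial) = c + i\mbr = \{\Res = c\}$ and with no eigenvalues. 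The adjoint is then computed via
\[ \partial^* \;=\; U^{-1}(cI + D)^* U \;=\; U^{-1}(cI - D)U \;=\; 2cI - U^{-1}(cI+D)U \;=\; 2c - \partial, \]
where the second equality uses $c \in \mbr$ and $D^* = -D$. It follows that $\sigma(\partial^*) = 2c - \sigma(\partial) = c + i\mbr$, and $\partial^*$ inherits the absence of eigenvalues from $\partial$.

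The main obstacle is the careful bookkeeping of domains in the unitary conjugation, namely verifying that under $U$ the domain $D(\partial) = \{f \in \mbh_c \cap C_{abs}(\mbr): f' \in \mbh_c\}$ maps bijectively onto $H^1(\mbr)$. This is the only step that is not a direct transcription of classical facts: one must check that multiplication by $e^{\pm ct}$ preserves local absolute continuity, that the weak derivative of $e^{ct}g$ agrees a.e.\ with the pointwise derivative of its absolutely continuous representative, and that the conditions ``$f, f' \in \mbh_c$'' and ``$g, g' \in L^2(\mbr)$'' are genuinely equivalent under the pointwise identity $f = e^{ct}g$. Once this equivalence of domains is cleanly established, all four assertions of the theorem follow mechanically from the well-known properties of $d/dt$ on $L^2(\mbr)$.
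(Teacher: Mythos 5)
Your proposal is correct. Note that the paper itself states Theorem \ref{T4.1} without proof, citing \cite{HerLap1}; the argument given there is precisely the one you describe, namely the unitary conjugation $f \mapsto e^{-ct}f$ carrying $\partial_c$ on $\mbh_c$ to $cI + d/dt$ on $L^2(\mbr)$ (with domain $H^1(\mbr)$), followed by the Fourier-transform analysis of $d/dt$, so your route matches the source; the only point worth making explicit is that the unboundedness of $\partial$ asserted in the theorem also follows immediately, since its spectrum $c + i\mbr$ is unbounded.
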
 

\tab Recall that given a (closed) unbounded, linear operator $L$ on a complex Hilbert space $H$, its {\em spectrum}, denoted by $\sigma (L)$, is defined as the  set of all $\lambda \in \mathbb{C}$ such that $L - \lambda I$ is not invertible, where $I$ denotes the identity operator. (In finite dimensions, this amounts to requiring that $L - \lambda I$ is not 1--1, and hence, $\lambda$ is an eigenvalue of $L$; so that $\sigma (L)$ is the finite set of eigenvalues of $L$, in that case.) The spectrum $\sigma (L)$ is always a closed subset of $\mathbb{C}$. Furthermore, $L$ is invertible if and only if $0 \notin \sigma (L)$. Moreover, an (unbounded, linear) operator $M$, with domain $D(M)$, is said to be {\em invertible} if the linear map $M: D(M) \rightarrow H$ is bijective and its set theoretic inverse $M^{-1}: H \rightarrow D(M) \subseteq H$ is bounded (which, by the closed graph theorem, is automatically true if $M$ is closed, as will always be the case in this paper). Finally, if $M$ is closed (i.e., if its graph is a closed subspace of $H \times H$ for the graph norm), then it is said to be {\em normal} if $M^*M = MM^*$ (equality between unbounded operators), where $M^*$ denotes the adjoint of $M$. For all of these notions, we refer, e.g., to [DunSch, JoLap, Kat, Ru2] and [ReSi, vol. I].\\

\tab In the next result, which is needed (in particular) to rigorously justify the formulas provided in \S 4$(a)$ (see \S 4$(c)$ below), we will implicitly use the theory of strongly continuous contraction semigroups of bounded linear operators and their infinitesimal generators (i.e., $m$-accretive operators; see, e.g., [Go, HiPh, JoLap, Kat, ReSi]).

\begin{proposition}[\cite{HerLap1}]\label{P4.2}
The semigroup $\{e^{-h \partial} \}_{h \geq 0}$ $($resp., $\{e^{h \partial} \}_{h \geq 0})$ with infinitesimal generator $\partial = \partial_c$ $($resp., $-\partial)$ is a contraction semigroup on $\mathbb{H}_c$ if $c \geq 0$ $($resp., $c \leq 0)$. Furthermore, if the roles of $\partial$ and $-\partial$ (or equivalently, of $c$ and $-c$) are interchanged, so are the roles of ``contraction'' and ``expansion'' semigroups in this statement. Moreover, for every $f \in \mathbb{H}_c$ and $h \in \mathbb{R}$, we have $||e^{-h \partial} f||_c = e^{-hc} ||f||_c$ and $(e^{-h \partial}f)(t) = f(t-h)$, \text{ for a.e.} $t \in \mathbb{R}$ $($so that $e^{-h \partial} f = f (\cdot - h)$ in $\mathbb{H}_c)$. As a result, $\partial$ is also referred to as the {\em infinitesimal shift of the real line.}
\end{proposition}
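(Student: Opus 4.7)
The plan is to introduce the translation operators $T(h)f(t) := f(t-h)$ directly on $\mathbb{H}_c$, derive their basic properties by elementary means, and then identify the one-parameter (semi)group $\{T(h)\}$ with $\{e^{-h\partial}\}$ via the uniqueness of semigroup generators.

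First, the norm identity follows from a direct change of variables. For $f \in \mathbb{H}_c$ and $h \in \mathbb{R}$, letting $u = t - h$ gives
\[
\|T(h)f\|_c^2 = \int_{-\infty}^{+\infty} e^{-2ct}|f(t-h)|^2\,dt = e^{-2ch}\int_{-\infty}^{+\infty} e^{-2cu}|f(u)|^2\,du = e^{-2ch}\,\|f\|_c^2,
\]
so $\|T(h)f\|_c = e^{-hc}\|f\|_c$. In particular, $T(h)$ is a bounded linear operator on $\mathbb{H}_c$ whose operator norm is exactly $e^{-hc}$. Moreover, $\{T(h)\}_{h \in \mathbb{R}}$ is a one-parameter group, since $T(0) = I$ and $T(h_1 + h_2) = T(h_1)T(h_2)$ by inspection. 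Strong continuity at $h = 0$ is obtained by a standard density argument: on the dense subspace $C_c(\mathbb{R}) \subset \mathbb{H}_c$, translation is continuous in the $\mathbb{H}_c$-norm by uniform continuity and dominated convergence, and since $\|T(h)\|$ is locally bounded in $h$, strong continuity extends to all of $\mathbb{H}_c$ by an $\varepsilon/3$ approximation.

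The key step is to identify the infinitesimal generator $A$ of the semigroup $\{T(h)\}_{h \geq 0}$ with $-\partial$, with domain $D(\partial)$ as specified just above Theorem 4.1. For $f \in D(\partial)$, i.e., $f$ locally absolutely continuous with $f' \in \mathbb{H}_c$, I would write
\[
\frac{T(h)f - f}{h}(t) = \frac{f(t-h) - f(t)}{h} = -\frac{1}{h}\int_0^h f'(t-s)\,ds,
\]
and use Minkowski's integral inequality together with the already-established strong continuity of translation (applied now to $f' \in \mathbb{H}_c$) to deduce convergence to $-f' = -\partial f$ in the $\mathbb{H}_c$-norm as $h \to 0^+$. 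Conversely, if the strong limit $\lim_{h \to 0^+} h^{-1}(T(h)f - f)$ exists in $\mathbb{H}_c$, then pairing against $C_c^\infty$ test functions and integrating by parts identifies $f$ as an element of $D(\partial)$ whose distributional derivative equals minus the limit. By the bijective correspondence between $C_0$-semigroups and their generators, this forces $T(h) = e^{-h\partial}$ for all $h \geq 0$, in agreement with the functional calculus of the normal operator $\partial$ from Theorem 4.1.

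Finally, the contraction/expansion dichotomy falls out of the norm identity: for $c \geq 0$ and $h \geq 0$ one has $\|e^{-h\partial}\| = e^{-hc} \leq 1$, so $\{e^{-h\partial}\}_{h \geq 0}$ is a contraction semigroup; for $c \leq 0$ and $h \geq 0$ one has $\|e^{h\partial}\| = e^{hc} \leq 1$, so $\{e^{h\partial}\}_{h \geq 0}$ (whose generator is $-\partial$) is a contraction semigroup; and reversing signs turns contraction into expansion in each case. The main obstacle I anticipate is rigorously pinning down the domain of the generator of $\{T(h)\}$ as exactly $D(\partial)$ rather than merely a dense core of it: this is where the weak-derivative characterization of absolute continuity does the essential work, ensuring that convergence of the difference quotient in $\mathbb{H}_c$ forces $f$ into $D(\partial)$. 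The remaining ingredients are routine change-of-variables and density bookkeeping.
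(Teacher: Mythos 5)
Your argument is correct and follows essentially the route the paper intends: the paper itself gives no proof of Proposition \ref{P4.2} (it is quoted from [HerLap1], with the surrounding text merely invoking the standard theory of strongly continuous contraction semigroups and their generators), and your computation of the exact norm $e^{-hc}$ by change of variables, the density/strong-continuity argument, and the two-sided identification of the generator of the translation group with $-\partial$ on precisely $D(\partial)$ constitute the standard and complete way to establish it. The only step you compress is the final identification of the semigroup generated by $-\partial$ with the functional-calculus exponential $e^{-h\partial}$ of the normal operator from Theorem \ref{T4.1}, but this follows from the spectral theorem plus uniqueness of a $C_0$-semigroup with a given generator, as you indicate.
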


\begin{remark}\label{R4.3}
Given $c \in \mathbb{R}$, the $c$-{\em momentum operator} is given by $p_c := \frac{1}{i} \partial_c$, where $i := \sqrt{-1}$. It is self-adjoint if and only if $c=0$, in which case it coincides with the classic momentum operator $p_0$ of quantum mechanics, acting on $\mathbb{H}_0 = L^2 (\mathbb{R})$ and with spectrum $\sigma (p_0) = \mathbb{R};$ see, e.g., $[$Kat, ReSi, Sc$].$ Similarly, if we let $-\Delta_c = (p_c)^2 = -(\partial_c)^2$ denote the {\em free} $c$-{\em Hamiltonian}, then the group $\{e^{-ih \Delta_c} \}_{h \in \mathbb{R}}$ is unitary if and only if $c = 0$.
\end{remark}

\tab We can next proceed to rigorously define the {\em spectral operator} $\mfa = \mfa_c$ by the expression $\mfa_c := \zeta (\partial_c)$ or, more concisely, $\mfa := \zeta (\partial)$, in the sense of the functional calculus for unbounded normal operators (see, e.g., \cite{Ru2}). Of course, we are using here the fact that (according to Thm. \ref{T4.1} above), $\partial_c$ is an unbounded normal operator. Note that, by definition, the domain of $\mfa$ is given by $D(\mfa) = \{f \in \mathbb{H}_c : \mfa f \in \mathbb{H}_c \}.$ In the sequel, $c \ell (E)$ denotes the closure of $E \subseteq \mathbb{C}$ in $\mathbb{C}$.

\begin{theorem}[\cite{HerLap1}]\label{T4.4}
For every $c \in \mathbb{R},$ the spectral operator $\mfa = \mfa_c$ is a $($possibly unbounded$)$ normal operator on $\mathbb{H}_c$, with spectrum $\sigma (\mfa)$ given by $\sigma (\mfa_c) = c \ell (\{\zeta (s): \Res = c, s \in \mathbb{C} \})$. $($When $c = 1$, the pole of $\zeta$, we must exclude $s=1$ on the right-hand side of the above expression for $\sigma (\mfa_c).)$ 
\end{theorem}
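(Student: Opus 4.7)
The plan is to reduce the statement to classical spectral theory of multiplication operators by diagonalizing $\partial = \partial_c$ via an explicit unitary transformation, and then invoking the Borel functional calculus for unbounded normal operators together with the standard identification of the spectrum of a multiplication operator with the essential range of its multiplier. By Theorem \ref{T4.1}, $\partial_c$ is an unbounded normal operator on $\mathbb{H}_c$ with $\sigma(\partial_c) = c + i\mathbb{R}$ and no eigenvalues; $\mfa_c := \zeta(\partial_c)$ is then understood in the sense of the Borel functional calculus (see \cite{Ru2} or [ReSi, vol.~I]). Since $\zeta$ is meromorphic on $\mbc$ with sole pole at $s = 1$, it is Borel measurable on $\sigma(\partial_c)$, and when $c = 1$ it may be arbitrarily redefined at $s = 1$ since, as will be clear below, this point carries no spectral mass.

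First I would construct the unitary equivalence. Let $U: \mathbb{H}_c \to L^2(\mathbb{R}, dt)$ be defined by $(Uf)(t) := e^{-ct} f(t)$; a direct computation shows that $U$ is unitary with $U \partial_c U^{-1} = d/dt + c$. Composing with the Fourier transform $\mathcal{F}$ (with the convention making $d/dt$ correspond to multiplication by $i\tau$), one obtains a unitary $V := \mathcal{F} U : \mathbb{H}_c \to L^2(\mathbb{R}, d\tau)$ satisfying $V \partial_c V^{-1} = M_{c + i\tau}$, the multiplication operator by $s = c + i\tau$. The intertwining property of the functional calculus then yields $V \mfa_c V^{-1} = M_{\zeta(c + i\tau)}$, the multiplication operator by the (a.e.-defined) function $\tau \mapsto \zeta(c + i\tau)$; in particular, since any multiplication operator by a Borel function is normal, $\mfa_c$ itself is normal on $\mathbb{H}_c$.

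It then remains to compute $\sigma(M_{\zeta(c + i\tau)})$, which by the standard description of multiplication operators equals the essential range of $\zeta(c + i\tau)$ with respect to Lebesgue measure on $\mathbb{R}$. When $c \neq 1$, $\zeta$ is continuous on $\{\Res = c\}$, and a straightforward topological argument (any $\lambda$ in the closure of the image is approximated by some $\zeta(c + i\tau_0)$, whereupon continuity produces a whole open interval of $\tau$'s of positive Lebesgue measure mapping into any given neighborhood of $\lambda$, while any point outside the closure admits a disjoint neighborhood) identifies this essential range with $c\ell(\{\zeta(s) : \Res = c\})$. When $c = 1$, $\zeta$ is continuous on $\{\Res = 1\} \setminus \{1\}$, and since $\tau = 0$ is Lebesgue-null, excluding it leaves the essential-range computation unchanged, giving $\sigma(\mfa_1) = c\ell(\{\zeta(s) : \Res = 1, s \neq 1\})$, as asserted.

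The main obstacle, in my view, is rigorously establishing the intertwining $V \mfa_c V^{-1} = M_{\zeta(c + i\tau)}$ at the level of the functional calculus, including the matching of domains $D(\mfa_c) = V^{-1} D(M_{\zeta(c + i\tau)})$; this rests on uniqueness and naturality of the Borel functional calculus for unbounded normal operators under unitary equivalence. The only additional subtlety arises at $c = 1$, where the multiplier $\zeta(1 + i\tau)$ has a singularity at $\tau = 0$; however, since $\{\tau = 0\}$ is Lebesgue-null, both the density of $D(\mfa_1)$ (automatic from the functional calculus) and the essential-range computation are unaffected.
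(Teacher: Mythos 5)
Your proposal is correct, and it reaches the conclusion by a more concrete route than the one the paper sketches. The paper's own argument is essentially a one-line appeal to a continuous (resp., meromorphic, for $c=1$) version of the spectral mapping theorem for unbounded normal operators, quoted from [HerLap1, App.~E]: since $\partial$ has no eigenvalues and $\zeta$ is continuous on a neighborhood of $\sigma(\partial)=\{\Res = c\}$ for $c\neq 1$ (meromorphic, viewed as $\widetilde{\mbc}$-valued, for $c=1$), one gets $\sigma(\zeta(\partial)) = c\ell(\zeta(\sigma(\partial)))$ directly. You instead diagonalize $\partial_c$ explicitly --- conjugating by the weight $e^{-ct}$ and then by the Fourier transform to realize $\partial_c$ as multiplication by $c+i\tau$ on $L^2(\mbr,d\tau)$ --- so that $\mfa_c$ becomes multiplication by $\zeta(c+i\tau)$, whose spectrum is its essential range; continuity of $\zeta$ off its pole plus the full support of Lebesgue measure then identify the essential range with the closure of the range, and the pole at $\tau=0$ when $c=1$ is harmless because it is Lebesgue-null. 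The two arguments rest on the same underlying facts (the spectral measure of $\partial_c$ is non-atomic and equivalent to Lebesgue measure on the vertical line, and $\zeta$ is continuous off $s=1$), but yours is self-contained and makes the role of ``no eigenvalues'' transparent, at the modest cost of having to verify the naturality of the Borel functional calculus under unitary equivalence (including domains), which is standard. The unitary diagonalization you construct is in fact essentially how Theorem~\ref{T4.1} itself is established in [HerLap1], so your proof can be viewed as unwinding the abstract spectral mapping theorem in this particular model.
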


\tab In words, Thm. \ref{T4.4} states that the spectrum of $\mfa$ coincides with the closure (i.e., the set of limit points) of the range of $\zeta$ along the vertical line $\{\Res = c \}$. This result follows from Thm. \ref{T4.1} combined with a suitable version of the spectral theorem for unbounded normal operators (see [HerLap1, App. E]) according to which [since $\partial$ does not have any eigenvalue and $\zeta$ is continuous for $c \neq 1$ (resp., meromorphic for $c=1$) in an open connected neighborhood of $\sigma (\partial) = \{\Res = c \}]$, we have that $\sigma (\mfa) = \sigma (\zeta (\partial)) = c \ell (\zeta (\sigma (\partial)))$, for $c \neq 1$. (For $c=1$, the extended spectrum is given by $\widetilde{\sigma} (\mfa) := \sigma (\mfa) \cup \{\infty \} = \zeta (\widetilde{\sigma}(\partial))$, where the meromorphic function $\zeta$ is now viewed as a continuous function from $\mathbb{C}$ to the Riemann sphere $\widetilde{\mathbb{C}} := \mathbb{C} \cup \{\infty \}$; from which the result also follows for $c=1.$)

\subsection{Justification of the definition of $\mfa$: Quantized Dirichlet series and Euler product.} The first justification for the definition of $\mfa$ given by $\mfa = \zeta(\partial)$ comes from the following result. In the sequel, $\mathcal{B}(\mathbb{H}_c)$ stands for the Banach algebra (and even, $C^*$-algebra) of bounded linear operators on $\mathbb{H}_c$, equipped with its natural norm: $||L|| := \sup \{ ||L f ||_c : f \in \mathbb{H}_c, ||f||_c \leq 1 \}$.

\begin{theorem}$([$HerLap1,5$])$.\label{T4.5}
For $c >1, \mfa = \zeta (\partial)$ is given by the following operator-valued $($or ``quantized''$)$ Dirichlet series and Euler product, respectively$:$ $\mfa = \zeta(\partial) = \sum_{n=1}^\infty n^{-\partial} = \prod_{p \in \mathcal{P}} (1- p^{-\partial})^{-1},$ where the series and the infinite product both converge in $\mathcal{B}(\mathbb{H}_c)$. Furthermore, for any integer $m \geq 1$ and $f \in \mathbb{H}_c$, we have $(m^{-\partial}) (f) (t) = f(t - \log m), \text{ for a. e. } t \in \mathbb{R}$ $($and hence, $(m^{-\partial})(f) = f (\cdot - \log m)$ in $\mathbb{H}_c)$.
\end{theorem}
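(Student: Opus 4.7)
The plan is to deduce the quantized Dirichlet series and Euler product representations by combining the shift-semigroup identification from Proposition \ref{P4.2} with the bounded Borel functional calculus for the normal operator $\partial$ (Theorem \ref{T4.1}), exploiting the classical fact that for $c>1$ the series $\sum n^{-s}$ and the product $\prod_p (1-p^{-s})^{-1}$ converge absolutely and uniformly on the closed half-plane $\{\Res \geq c\}$, and in particular on the spectrum $\sigma(\partial) = \{\Res = c\}$. I would organize the proof in three steps.

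\textbf{Step 1 (the shift identity).} By the functional calculus applied to the function $s \mapsto m^{-s} = e^{-(\log m) s}$, one has $m^{-\partial} = e^{-(\log m)\partial}$ as bounded operators on $\mbh_c$. Since $\log m \geq 0$ for $m \geq 1$ and $c > 1 > 0$, Proposition \ref{P4.2} applies directly and yields both the a.e.\ pointwise formula $(m^{-\partial} f)(t) = f(t - \log m)$ for all $f \in \mbh_c$, and the operator-norm identity $\|m^{-\partial}\|_{\mathcal{B}(\mbh_c)} = m^{-c}$.

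\textbf{Step 2 (the Dirichlet series).} From Step 1, $\sum_{n=1}^\infty \|n^{-\partial}\| = \sum_{n=1}^\infty n^{-c} = \zeta(c) < \infty$, so the partial sums $S_N := \sum_{n=1}^N n^{-\partial}$ form a norm-Cauchy sequence in $\mathcal{B}(\mbh_c)$ and converge in operator norm to some $S$. To identify $S$ with $\zeta(\partial) = \mfa$, invoke the spectral theorem for the unbounded normal operator $\partial$ (cf.\ Theorem \ref{T4.4}): there is a projection-valued measure $E$ on $\sigma(\partial) = \{\Res = c\}$ with $g(\partial) = \int g(\lambda)\, dE(\lambda)$ for every bounded Borel $g$ on $\sigma(\partial)$. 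On that vertical line, the partial sums $\sum_{n=1}^N n^{-\lambda}$ are uniformly bounded by $\zeta(c)$ and converge uniformly to $\zeta(\lambda)$; the dominated/uniform convergence theorem for the spectral integral then gives $S_N = \int \sum_{n=1}^N n^{-\lambda}\, dE(\lambda) \to \int \zeta(\lambda)\, dE(\lambda) = \zeta(\partial)$ in operator norm, whence $S = \mfa$.

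\textbf{Step 3 (the Euler product).} For each prime $p$, $\|p^{-\partial}\| = p^{-c} < 1$ by Step 1, so the Neumann series $(1-p^{-\partial})^{-1} = \sum_{m=0}^\infty p^{-m\partial}$ converges in norm with $\|(1-p^{-\partial})^{-1}\| \leq (1 - p^{-c})^{-1}$. The partial products $P_N := \prod_{p \leq N,\, p \in \mathcal{P}} (1-p^{-\partial})^{-1}$ are therefore well-defined bounded operators. On $\sigma(\partial) = \{\Res = c\}$, the corresponding scalar partial products $\prod_{p \leq N} (1-p^{-\lambda})^{-1}$ are uniformly bounded (by $\zeta(c)$) and converge uniformly to $\zeta(\lambda)$ by the classical absolute-convergence argument for $\Res \geq c > 1$; applying the same spectral-integral convergence theorem as in Step 2 yields $P_N \to \zeta(\partial) = \mfa$ in $\mathcal{B}(\mbh_c)$.

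The only delicate point is the interchange of operator-norm limits with the spectral integral, but this is routine given the bounded Borel functional calculus combined with the uniform bounds $|\sum_{n=1}^N n^{-\lambda}|, |\prod_{p \leq N}(1-p^{-\lambda})^{-1}| \leq \zeta(c) < \infty$ on $\sigma(\partial)$. Thus the argument is essentially a ``quantization'' of the classical proofs of the Dirichlet series and Euler product for $\zeta(s)$ in the half-plane of absolute convergence; no serious obstacle arises once the normality of $\partial$ and the shift identity from Step 1 are in hand.
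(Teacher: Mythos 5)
Your proof is correct and takes the approach the paper intends: the article itself defers the full proof of Theorem \ref{T4.5} to [HerLap1,5], but the one justification it does give --- that the last part follows from Proposition \ref{P4.2} via $m^{-\partial}=e^{-(\log m)\partial}$ in the functional calculus --- is exactly your Step 1, and your Steps 2--3 are the standard absolute-convergence arguments for $\zeta$ in $\{\Res>1\}$ transported to $\mathcal{B}(\mathbb{H}_c)$ through the norm identity $\|n^{-\partial}\|=n^{-c}$ and the sup-norm bound of the bounded Borel functional calculus on $\sigma(\partial)=\{\Res=c\}$. I see no gaps.
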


The latter part of Thm. \ref{T4.5} follows from Prop. \ref{P4.2} since $m^{-\partial} = e^{-(\log m) \partial}$, in the sense of the functional calculus. Naturally, Thm. \ref{T4.5} also provides a rigorous justification of the heuristic discussion (based on [Lap-vFr2--3, \S 6.3.2]) given in \S 4$(a)$ above. The following results (also from \cite{HerLap5} and [HerLap1, Ch. 7]) go well beyond that discussion but will not be precisely or fully stated here, by necessity of concision: \\

\tab (1) ({\em Analytic continuation of} $\mfa_c = \zeta (\partial_c)$, {\em for} $c > 0$). For $c > 0, \mfa = \mfa_c$ coincides with the operator-valued ``analytic continuation'' of the quantized Dirichlet series and Euler product (defined only for $c > 1$, initially). More specifically, for a dense subspace of functions $f$ in $D(\partial)$ (and hence also in $\mathbb{H}_c$), we have for all $c > 0$, $\mfa(f) = \left(\frac{1}{\partial -1}\right) (f) + \int_0^{+ \infty} ([t]^{-\partial} (f) -t^{-\partial} (f)) dt,$ where $[t]$ is the integer part of $t$. Note that this identity is an operator theoretic version of the analytic continuation of $\zeta = \zeta(s)$ to $\{\Res > 0\}$ given in \S 3$(a)$.\\

\tab (2) ({\em Analytic continuation of} $\mathcal{A}_c =\xi (\partial_c),$ {\em for} $c \in \mathbb{R}$, {\em and quantized functional equation}). Let the {\em global spectral operator} be defined by $\mathcal{A} = \xi (\partial)$, where $\xi$ is the global (or completed) zeta function $\xi (s) := \pi^{-s/2} \Gamma (s/2) \zeta (s)$. Then, for all $c \in \mathbb{R}$, the operator-valued analytic continuation of $\mathcal{A} = \mathcal{A}_c$ is given by a formula analogous to the one in (1) just above, but in which the symmetry $s \leftrightarrow 1-s$ is more immediately manifest. (This is a quantized counterpart of the corresponding formula for $\xi (s)$; see, e.g., [Edw, Ti] or [Lap6, Eq. (2.4.10)].) One deduces from this identity an operator-valued analog (for $\mathcal{A}_c$) of the classic functional equation $\xi(s) = \xi (1-s)$. Namely, $\mathcal{A}_c = \mathcal{B}_c$, where $\mathcal{B}_c := \xi (I -\partial_c)$. In particular, $\xi (\partial_{1/2}) = \xi (\partial_{1/2}^*)$. \\

\tab (3) ({\em Inverse of} $\mfa$ {\em for} $c >1$). For $c>1$ (as in Thm. \ref{T4.5}), $\mfa = \zeta(\partial)$ is invertible (in $\mathcal{B} (\mathbb{H}_c)$), with inverse $\mfa^{-1}$ given by $\mfa^{-1} = ( 1 / \xi )(\partial)= \sum_{n=1}^\infty \mu (n) n^{-\partial}$, where the series is convergent in $\mathcal{B}(\mathbb{H}_c)$ and $\mu$ is the classic M\" obius function (see, e.g., \cite{Edw}), defined by $\mu(n) = 0$ if $n$ is not square-free and $\mu(n) = \pm 1$ depending on whether $n$ is the product of an even or odd number of primes, respectively. We leave it to the reader to obtain a similar formula, but now expressed in terms of an infinite product, using the quantized Euler product given in Thm. \ref{T4.5}. \\

\tab We hope that the above discussion provides a sufficiently convincing sample of formulas from ``{\em quantized number theory}'', in the terminology of [HerLap1--5] and [Lap8--9]. Many additional formulas can be obtained, involving either $\zeta (\partial)$ (or $\xi(\partial)$) or, more generally, $\zeta_L (\partial),$ where $\zeta_L$ is any of the classic $L$-functions of number theory and arithmetic geometry; see, e.g., [ParSh, Sarn, Lap6] and [Lap-vFr3, App. A]. Further exploration in this direction is conducted in \cite{Lap8} and in \cite{Lap9}, either in the present framework or else in a different functional analytic and operator theoretic framework (using Bergman spaces, [AtzBri, HedKorZh]); see Rem. \ref{R5.4}($b$) below.

\subsection{Quasi-invertibility of $\mfa$ and the Riemann hypothesis.} 
In \S 3$(b)$ above, we have briefly described the work of \cite{LapMai2} in which a spectral reformulation of the Riemann hypothesis was obtained; see, specifically, Thm. \ref{T3.5}. We will provide here an operator theoretic analog of this reformulation, based on the work of [HerLap1--3].\\

\tab Given $T > 0$, let  $\mfa^{(T)} := \zeta (\partial^{(T)})$ denote the $T$-{\em truncated spectral operator}, where $\partial^{(T)} := \rho^{(T)}(\partial)$ is the $T$-{\em truncated infinitesimal shift}, defined (thanks to Thm. \ref{T4.1} above) via the functional calculus for unbounded normal operators. In practice, one simply refers to $\mfa^{(T)}$ and $\partial^{(T)}$ as the truncated spectral operator and the truncated infinitesimal shift, respectively. The precise definition of the {\em cut-off function} $\rho^{(T)}$ is unimportant, provided $\rho^{(T)}$ is chosen so that the following computation of the spectrum of $\partial^{(T)}$ is justified (since $\sigma (\partial) = c + i \mathbb{R}$, by Thm. \ref{T4.1}): $\sigma (\partial^{(T)}) = c \ell (\rho^{(T)}(\sigma (\partial))) := [c-iT, c + iT].$ Since $\partial^{(T)} = \rho^{(T)} (\partial)$, the first equality follows from the spectral mapping theorem (SMT), (applied to the unbounded normal operator $\partial$ and the function $\rho^{(T)}$, see [HerLap1, App. E]), while the second equality follows from the construction of $\rho^{(T)}$. In fact, when $c \neq 1$, the only requirement about $\rho^{(T)}$ is that it be a $\mbc$-valued, continuous function on $c + i \mbr = \{\Res =c \}$ whose (closure of the) range is exactly equal to the vertical segment $[c- iT, c + iT]$. When $c=1$ (which corresponds to the pole of $\zeta (s)$ at $s=1$), instead of requiring that $\rho^{(T)}$ is continuous, we assume that $\rho^{(T)}$ has a meromorphic continuous to a connected neighborhood of $\{\Res =c \}$. Indeed, when $c \neq 1$ (resp., $c=1$), we can then apply the continuous (resp., meromorphic) version of the spectral mapping theorem (SMT) given in [HerLap1, App. E].\\

\tab Again, in light of the functional calculus form of the spectral theorem and of SMT, given in [HerLap1, App. E] (applied to the operator $\partial$ and to the function $\zeta = \zeta (s)$, which is continuous along $\sigma (\partial) = \{\Res = c \}$ if $c \neq 1$ and meromorphic in a connected open neighborhood of $\sigma (\partial)$ if $c = 1$, the pole of $\zeta$), we have (since $\mfa^{(T)} := \zeta (\partial^{(T)})$) that $\sigma (\mfa^{(T)}) = c \ell (\zeta (\sigma (\partial^{(T)})) = c \ell (\zeta ([c-iT, c+ iT]))$. Hence, we conclude (see \cite{HerLap1} for the proof) that the spectrum of $\mfa^{(T)}$ is given by $\sigma (\mfa^{(T)}) = \{\zeta(s) : \Res = c, |I_{m}(s)| \leq T \}$, which is a compact subset of $\mathbb{C}$ for $c \neq 1$. For $c=1$, the same formula holds provided one keeps the closure and also requires that $s \neq 1$ on the right side of the above expression for $\sigma (\mfa^{(T)})$. Alternatively, one can view $\zeta$ as a continuous $\widetilde{\mathbb{C}}$-valued function, with $\widetilde{\mbc} := \mbc \cup \{\infty \}$, and replace the left side by $\widetilde{\sigma} (\mfa^{(T)}) := \sigma (\mfa^{(T)}) \cup \{\infty \}$, the {\em extended spectrum} of $\mfa^{(T)}$. Note that for $c \neq 1, \mfa^{(T)}$ is bounded, so that $\widetilde{\sigma} (\mfa^{(T)}) := \sigma (\mfa^{(T)}),$ following the traditional notation.

\begin{remark}\label{R4.35}
Since, as follows easily from Thm. \ref{T4.1}, we have $\partial = c + iV$, where $V:= (\partial - c)/i$ is an unbounded self-adjoint operator with spectrum $\sigma (V)$ given by $\sigma (V) = \mbr$ $($and with no eigenvalues$)$, one can equivalently define $\partial^{(T)}$ by  $\partial^{(T)}: = c + i V^{(T)}$, where $V^{(T)} := \Phi^{(T)} (V)$ and $\Phi^{(T)}:\mbr \rightarrow \mbr$ satisfies $c \ell (\Phi^{(T)} (\mbr)) = [-T, T]$. In addition, when $c \neq 1$, one requires that $\Phi^{(T)}$ is continuous while when $c=1$, one assumes that $\Phi^{(T)}$ admits a meromorphic extension to a connected neighborhood of $\mbr$ in $\mbc$.
\end{remark}

\tab We can now introduce the following key definition. The spectral operator $\mfa$ is said to be {\em quasi-invertible} if each of its truncations $\mfa^{(T)}$ is invertible (for $T > 0$), in the usual sense (and hence, if for every $T > 0, 0 \notin \sigma (\mfa^{(T)})$. Using Thm. \ref{T4.4} and the above expression for $\sigma (\mfa^{(T)})$, along with the fact that an operator is invertible if its spectrum does not contain the origin, it is easy to check that if $\mfa$ is invertible, then it is quasi-invertible. The converse, however, need not be true, as we shall see in \S5. (It turns out that assuming that for all $c \in (0, 1/2),$ the quasi-invertibility of $\mfa= \mfa_c$ is equivalent to its invertibility, is equivalent to the Riemann hypothesis; see Thm. \ref{T5.4}.) \\

\tab The first result in this context is the exact operator theoretic analog of Thm. \ref{T3.4} (from \S 4$(c)$ above). 

\begin{theorem}[Riemann zeros and quasi-invertibility of $\mfa,$ \cite{HerLap1}, \cite{HerLap3}]\label{T4.6} Fix $c \in \mathbb{R}$. Then, the spectral operator $\mfa = \mfa_c$ is quasi-invertible if and only if $\zeta = \zeta(s)$ does not have any zeros on the vertical line $\{ \Res = c\}$. 
\end{theorem}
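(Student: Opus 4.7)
The plan is to deduce the theorem more or less directly from the spectral description of the truncated spectral operator $\mfa^{(T)}$ that was established in the paragraphs immediately preceding the statement, combined with the definition of quasi-invertibility. All the hard analytic work (computing $\sigma(\partial)$, constructing $\partial^{(T)}$ via a cutoff $\rho^{(T)}$, and applying the functional-calculus form of the spectral mapping theorem to obtain $\sigma(\mfa^{(T)})$) has in fact already been carried out, so the argument reduces to a logical rearrangement.

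First, I would unwind the definition: $\mfa$ is quasi-invertible if and only if $\mfa^{(T)}$ is invertible for every $T>0$, which, since each $\mfa^{(T)}$ is a closed operator (even bounded for $c\neq 1$), is equivalent to $0\notin\sigma(\mfa^{(T)})$ for every $T>0$. Next, I would invoke the identity already derived in the text just before the theorem, namely
\[
\sigma(\mfa^{(T)}) \;=\; \{\zeta(s)\,:\, \Res=c,\ |\mathrm{Im}(s)|\le T\},
\]
which rests on Theorem~\ref{T4.1} (giving $\sigma(\partial)=c+i\mbr$ with no eigenvalues) and the SMT applied successively to $\rho^{(T)}$ on $\partial$ and then to $\zeta$ on $\partial^{(T)}$, as recalled in the excerpt.

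Combining these two observations yields both implications almost at once. If $\zeta$ has no zero on $\{\Res=c\}$, then for every $T>0$ the set $\zeta([c-iT,c+iT])$ is the continuous image of a compact segment (for $c\neq 1$) and hence is itself a compact subset of $\mbc$ avoiding $0$; so $0\notin\sigma(\mfa^{(T)})$ for all $T>0$, and $\mfa$ is quasi-invertible. Conversely, if $\zeta(c+it_0)=0$ for some $t_0\in\mbr$, then for any $T\ge |t_0|$ one has $0=\zeta(c+it_0)\in\sigma(\mfa^{(T)})$, so $\mfa^{(T)}$ fails to be invertible and $\mfa$ fails to be quasi-invertible. The case $c=1$ needs only a small cosmetic adjustment: one works with the extended spectrum $\widetilde{\sigma}(\mfa^{(T)})=\sigma(\mfa^{(T)})\cup\{\infty\}$, excluding the pole $s=1$ from the line in the displayed identity, and uses Hadamard's theorem (no zeros of $\zeta$ on $\Res=1$) to conclude quasi-invertibility unconditionally on that line.

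The main (indeed, only) subtlety I anticipate is the interplay between the closure operation in the SMT statement and the invertibility of $\mfa^{(T)}$: a priori, $0$ could enter $\sigma(\mfa^{(T)})$ as a limit point of $\zeta$-values rather than as an actual value. For $c\neq 1$, continuity of $\zeta$ on the compact segment $[c-iT,c+iT]$ makes the image already compact, so the closure is redundant and $0\in\sigma(\mfa^{(T)})$ forces $0=\zeta(s)$ for some $s$ on the segment. For $c=1$ one argues similarly, invoking the meromorphic version of the SMT recalled in \cite{HerLap1} and disposing of the pole via $\widetilde{\mbc}$-valued continuity. With this point handled, the equivalence is immediate and the proof is essentially complete.
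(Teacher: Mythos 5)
Your proposal is correct and follows exactly the route the paper sets up: the paper itself defers the proof of Theorem \ref{T4.6} to [HerLap1,3], but the spectral identity $\sigma(\mfa^{(T)})=\{\zeta(s):\Res=c,\ |\mathrm{Im}(s)|\le T\}$ (compact for $c\neq 1$, with the closure and the exclusion of $s=1$ when $c=1$) derived in the paragraphs just before the statement is precisely the key ingredient, and your unwinding of quasi-invertibility into ``$0\notin\sigma(\mfa^{(T)})$ for all $T>0$'' is the intended argument. Your handling of the closure subtlety and of the $c=1$ case via the extended spectrum matches the paper's own remarks.
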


It follows that in the midfractal case when $c=1/2$, $\mfa_{1/2}$ is not quasi-invertible (since $\zeta$ has a zero along the critical line $\{\Res = 1/2\}$) and that for $c \geq 0$ (in order to avoid the trivial zeros of $\zeta$, located at $s= -2, -4, \cdots$), $\mfa$ is expected to be quasi-invertible everywhere else. (Recall that according to Hadamard's theorem (see, e.g., [Edw, Ti]) and the functional equation for $\zeta$, $\zeta = \zeta (s)$ does not have any zeros along the vertical lines $\{\Res =1 \}$ and $\{\Res =0 \}$.) In particular, we have the following operator theoretic counterpart of Thm. \ref{T3.5} (from \S 3$(b)$).

\begin{theorem}[Riemann hypothesis and quasi-invertibility of $\mfa$, \cite{HerLap1},\cite{HerLap3}]\label{T4.7} The spectral operator $\mfa = \mfa_c$ is quasi-invertible for all $c \in (0,1)$ $($other than for $c=1/2)$ $[$or equivalently, for all $c \in (0,1/2)$ or else, for all $c \in (1/2, 1)$, respectively$]$ if and only if the Riemann hypothesis is true.
\end{theorem}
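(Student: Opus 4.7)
My plan is to derive Theorem~\ref{T4.7} as a direct corollary of Theorem~\ref{T4.6}, by translating the pointwise quasi-invertibility criterion into a statement about the zero-free region of $\zeta$ and then invoking the functional equation to collapse the three formulations in the theorem to a single one. First I would apply Theorem~\ref{T4.6} pointwise: for each $c \in (0,1)$ with $c \neq 1/2$, $\mfa_c$ is quasi-invertible if and only if $\zeta(s) \neq 0$ on the vertical line $\{\Res = c\}$. Quantifying over all such $c$, the assertion that $\mfa_c$ is quasi-invertible for every $c \in (0,1) \setminus \{1/2\}$ translates into: $\zeta$ has no zero $\rho$ with $0 < \text{Re}(\rho) < 1$ and $\text{Re}(\rho) \neq 1/2$. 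Since the nontrivial zeros of $\zeta$ are already known to lie in the open critical strip (via Hadamard's theorem, which rules out zeros on $\{\Res = 1\}$, combined with the functional equation, which transfers this to $\{\Res = 0\}$), this last assertion is exactly the Riemann hypothesis.

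To handle the equivalence of the three formulations (the full punctured range, the left half, and the right half), I would invoke the functional equation $\xi(s) = \xi(1-s)$ for the completed zeta function $\xi(s) := \pi^{-s/2}\Gamma(s/2)\zeta(s)$. Since $\Gamma(s/2)$ is nonvanishing on $\mbc$ and its poles (at $s = 0, -2, -4, \ldots$) lie outside the open critical strip, the nontrivial zeros of $\zeta$ form a set symmetric under the involution $s \mapsto 1-s$. Consequently, $\zeta$ vanishes on $\{\Res = c\}$ for some $c \in (0, 1/2)$ if and only if it vanishes on $\{\Res = 1-c\}$, a line in the symmetric range $(1/2, 1)$. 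Reapplying Theorem~\ref{T4.6} at each such $c$, the three quasi-invertibility statements on $(0,1/2)$, on $(1/2,1)$, and on $(0,1)\setminus\{1/2\}$ are mutually equivalent, and each is separately equivalent to RH.

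The substantive obstacle has in fact already been overcome in the proof of Theorem~\ref{T4.6}, which is where the operator-theoretic work lies: identifying $\sigma(\mfa^{(T)})$ with $\{\zeta(s) : \Res = c,\ |\text{Im}(s)| \leq T\}$ via the spectral mapping theorem applied to $\zeta(\partial^{(T)})$, and reading off quasi-invertibility as the condition $0 \notin \sigma(\mfa^{(T)})$ for every $T > 0$. Given that input, the residual concern in proving Theorem~\ref{T4.7} is a bookkeeping matter: ensuring that no nontrivial zero escapes detection by the family $\{c \in (0,1),\ c \neq 1/2\}$, which is precisely where the Hadamard-type boundary nonvanishing and the $s \mapsto 1-s$ symmetry are used. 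The sole genuinely delicate point one should flag is that $c = 1/2$ must be excluded from the family because $\zeta$ is known (by Hardy) to have zeros on the critical line, so $\mfa_{1/2}$ fails to be quasi-invertible \emph{unconditionally}, independently of RH, which is why this midfractal value must be removed from all three formulations.
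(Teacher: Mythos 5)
Your proposal is correct and follows essentially the same route as the paper: Theorem \ref{T4.7} is obtained as a direct corollary of the pointwise criterion in Theorem \ref{T4.6}, with Hadamard's theorem and the functional equation handling the boundary lines and the $s \mapsto 1-s$ symmetry that makes the three formulations equivalent, and with the unconditional failure at $c=1/2$ coming from the existence of a zero on the critical line (the paper notes that one zero and its conjugate suffice, so Hardy's full theorem is not needed). No gaps.
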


Just as for Thm. \ref{T3.5}, we refer to Thm. \ref{T4.7} as a {\em symmetric criterion for the Riemann hypothesis} (the symmetry of the roles played by the intervals $(0,1/2)$ and $(1/2, 1)$ being due to the functional equation satisfied by $\zeta$). Finally, we close this section by noting that Thm. \ref{T4.7} also provides a precise and correct functional analytic formulation of the semi-heuristic result stated in Cor. 9.6 of [Lap-vFr2,3] about the ``invertibility'' of $\mfa$.

\section{Invertibility of the Spectral Operator and an Asymmetric Reformulation of the Riemann Hypothesis}

We have discussed in \S 3$(b)$ (based on [LapMai1,2]) and in \S 4$(d)$ (based on [HerLap1--3]) two different, but related (at least in spirit) {\em symmetric} reformulations of the Riemann hypothesis; see Thm. \ref{T3.5} (from \S 3$(b)$) and Thm. \ref{T4.7} (from \S 4$(d)$), respectively. Our goal in this section is to provide a new {\em asymmetric} reformulation of the Riemann hypothesis (due to the author), as well as to explain why it cannot be modified so as to become symmetric (with respect to the midfractal dimension $c=1/2$), without drastically changing the nature of the problem. As we shall see, this new reformulation is directly expressed in terms of the invertibility of $\mfa$. \\

\tab In \S 4$(d)$, we have studied the quasi-invertibility of the spectral operator $\mfa = \mfa_c$ (that is, the invertibility of each of its truncations $\mfa^{(T)} = \mfa_c^{(T)}$, where $T>0$ is arbitrary). In fact, both Thm. \ref{T4.6} and Thm. \ref{T4.7} (from \S 4$(d)$) are stated in terms of the quasi-invertibility of $\mfa_c$, either for a fixed $c \in (0,1)$, with $c \neq 1/2$ (in Thm. \ref{T4.6}) or (in Thm. \ref{T4.7}) for all $c \in (0,1)$ with $c \neq 1/2$ (equivalently, for all $c \in (0, 1/2)$ or else, for all $c \in (1/2, 1)$). (Recall from the discussion following Thm. \ref{T4.6} that $\mfa_{1/2}$ is {\em not} quasi-invertible and hence, not invertible.) However, although very convenient in the context of \S 4$(d)$, the new notion of ``quasi-invertibility'', because it involves a truncation of the infinitesimal shift $\partial$ (and hence, also of $\mfa$ as well as of its spectrum $\sigma (\mfa)$; see the beginning of \S 4$(d)$), is perhaps not so easy to grasp or to verify explicitly. Indeed, we lack concrete formulas for $\mfa^{(T)}$ whereas we have several explicit expressions for $\mfa$ to our disposal, even when $c \in (0,1)$. \\

\tab It is therefore natural to wonder whether the (usual) notion of invertibility of the spectral operator $\mfa = \mfa_c$ cannot be used to reformulate the Riemann hypothesis in this context. (See the discussion following Thm. \ref{T4.1} for the usual notion of invertibility of a possibly unbounded operator.) The answer to this question is affirmative, as we shall explain below; see Thm. \ref{T5.1}. Moreover, in light of Thm. \ref{T4.7} (based on the results of [HerLap1--4]), the asymmetry of the resulting criterion for RH turns out to be intimately connected to the universality of the Riemann zeta function $\zeta = \zeta (s)$ in the right critical strip $\{1/2 < \Res < 1 \}$ as well as to its (presumed) non-universality in the left critical strip $\{0 < \Res < 1/2 \}$. This last statement will become clearer as we progress in \S5. We can now state and prove the main result of this section. (See also Thm. \ref{T5.4}.)

\begin{theorem}[Asymmetric criterion for RH and invertibility of $\mfa$]\label{T5.1}
The spectral operator $\mfa = \mfa_c$ is invertible for all $c \in (0, 1/2)$ if and only the Riemann hypothesis is true. 
\end{theorem}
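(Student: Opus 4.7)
The plan is to use Theorem \ref{T4.4} to translate invertibility into a quantitative non-vanishing condition on $\zeta$ along a vertical line, and then to bootstrap from the known quasi-invertibility result (Theorem \ref{T4.7}) using the functional equation together with classical conditional estimates on $\zeta$ in the right critical strip.

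By Theorem \ref{T4.4}, $\sigma(\mfa_c) = c\ell\bigl(\zeta(\{\Res = c\})\bigr)$, and since a normal operator is invertible if and only if $0$ lies outside its spectrum, $\mfa_c$ is invertible if and only if
\[
\inf_{t \in \mbr} |\zeta(c+it)| \;>\; 0.
\]
This is strictly stronger than the quasi-invertibility of $\mfa_c$, which by Theorem \ref{T4.6} only amounts to $\zeta(c+it) \neq 0$ for every $t$; invertibility further rules out $0$ being a \emph{limit point} of $\zeta$ along the line. The forward implication is therefore immediate: invertibility of $\mfa_c$ for every $c \in (0, 1/2)$ trivially implies quasi-invertibility for every such $c$, which by Theorem \ref{T4.7} is equivalent to RH.

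For the converse, assume RH and fix $c \in (0, 1/2)$. Theorem \ref{T4.7} already provides that $\zeta(c+it) \neq 0$ for all $t \in \mbr$, so $|\zeta(c+it)|$ is continuous and positive, hence bounded below on any compact segment $|t| \leq T$. It therefore suffices to control the behavior as $|t| \to \infty$. Apply the asymmetric form $\zeta(s) = \chi(s)\zeta(1-s)$ of the functional equation, where $\chi(s) := \pi^{s-1/2}\Gamma((1-s)/2)/\Gamma(s/2)$. Stirling's formula yields
\[
|\chi(c+it)| \;\sim\; K_c \, |t|^{1/2 - c} \qquad \text{as } |t| \to \infty,
\]
a positive power of $|t|$ since $c < 1/2$. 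The reflected point $1-c-it$ lies on the line $\{\Res = 1-c\} \subset (1/2, 1)$, and there the classical conditional estimate of Littlewood (see, e.g., \cite{Ti}, Ch.~XIV) gives $\log|\zeta(1-c+it)|^{-1} = O\!\bigl((\log|t|)^{2c}/\log\log|t|\bigr)$, which is $o(\log|t|)$ since $2c < 1$. Hence $|\zeta(1-c-it)|^{-1}$ grows more slowly than any positive power of $|t|$, and the product $|\zeta(c+it)| = |\chi(c+it)| \cdot |\zeta(1-c-it)|$ tends to $+\infty$. Combined with the bound on compact segments, this delivers $\inf_{t \in \mbr}|\zeta(c+it)| > 0$, i.e., invertibility of $\mfa_c$.

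The main obstacle is the conditional sub-polynomial lower bound on $|\zeta|$ in the right critical strip, a delicate classical result derived via Borel--Carath\'eodory-type estimates applied to $\log\zeta$. This step also explains the genuine asymmetry of the criterion: the analogous statement for $c \in (1/2, 1)$ fails because the universality of $\zeta$ in the right critical strip forces $\inf_{t}|\zeta(c+it)| = 0$ for every such $c$, so $\mfa_c$ is never invertible there although, under RH, it does remain quasi-invertible. The functional equation, through the polynomial blow-up of $|\chi|$ when $c < 1/2$, amplifies the mild conditional lower bound on the right into an unbounded lower bound for $|\zeta|$ on the mirror line, producing the required spectral gap for $c \in (0, 1/2)$ but not for its reflection.
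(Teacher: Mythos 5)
Your proposal is correct, and the easy direction (invertibility $\Rightarrow$ RH) matches the paper's in substance: the paper argues directly from Theorem \ref{T4.4} that $0 \notin \sigma(\mfa_c)$ forces $\zeta \neq 0$ on $\{\Res = c\}$ and then invokes the functional equation, while you route through quasi-invertibility and Theorem \ref{T4.7}; both are one-line reductions. The substantive direction (RH $\Rightarrow$ invertibility for $c \in (0,1/2)$) is where you genuinely diverge. The paper treats this as a citation: it appeals to Lemma 4 and Proposition 5 of Garunk\v{s}tis--Steuding \cite{GarSte} on the (conditional) non-universality of $\zeta$ in the left critical strip, which deliver exactly the statement that $0$ is not in the closure of $\zeta(\{\Res = c\})$ for $c \in (0,1/2)$ under RH. You instead reconstruct that input from first principles: boundedness below on compact $t$-ranges by continuity and non-vanishing, plus the asymmetric functional equation $\zeta(s) = \chi(s)\zeta(1-s)$ with $|\chi(c+it)| \asymp |t|^{1/2-c}$, plus the classical conditional estimate $1/\zeta(\sigma+it) = O(|t|^{\varepsilon})$ for fixed $\sigma \in (1/2,1)$ (Titchmarsh \cite{Ti}, Ch.\ XIV). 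This is essentially how the Garunk\v{s}tis--Steuding lemma is itself proved, so you have unpacked the black box rather than found a new phenomenon; what your version buys is transparency about the source of the asymmetry --- the $\chi$-factor blows up polynomially for $c < 1/2$ and decays for $c > 1/2$, so the same conditional sub-polynomial lower bound on the right half of the strip amplifies into a spectral gap only on the left --- whereas the paper's version is shorter and attributes the analytic work to its proper reference. One presentational caveat: your Littlewood-type bound with the $\log\log|t|$ refinement is stronger than needed; the plain $O(|t|^{\varepsilon})$ bound for $1/\zeta$ on fixed vertical lines to the right of the critical line already suffices once $\varepsilon < 1/2 - c$, and citing that simpler form would make the step easier to verify.
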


\tab By contrast, we have the following result, which was already observed in [HerLap1--4] and clearly shows that the reformulation of RH obtained in Thm. \ref{T5.1} is asymmetric, in a strong sense.

\begin{theorem}\label{T5.2}
The spectral operator $\mfa = \mfa_c$ is not invertible for {\em any} $c \in (1/2, 1)$.
\end{theorem}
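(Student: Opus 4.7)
The plan is to reduce non-invertibility of $\mfa_c$ to the statement that $0$ lies in the closure of $\zeta$ along the vertical line $\{\Res = c\}$, and then invoke a classical density theorem for $\zeta$ in the right critical strip. Concretely, I would argue in three short steps.

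First, I would recall Theorem \ref{T4.4}, which identifies the spectrum of the spectral operator as
\[
\sigma(\mfa_c) \;=\; c\ell\bigl(\{\zeta(s) : \Res = c,\ s \in \mathbb{C}\}\bigr).
\]
For $c \in (1/2, 1)$, the point $s = 1$ does not lie on the line $\{\Res = c\}$, so there is no issue with the pole of $\zeta$, and $\mfa_c$ is closed and normal. By the general characterization recalled after Theorem \ref{T4.1}, $\mfa_c$ is invertible if and only if $0 \notin \sigma(\mfa_c)$. Hence Theorem \ref{T5.2} is equivalent to showing that for every $c \in (1/2, 1)$, the value $0$ is a limit point of the image $\zeta(\{\Res = c\})$.

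Second, I would invoke the classical Bohr--Courant density theorem (which is a much weaker consequence of Voronin's universality theorem for $\zeta$ in the right critical strip, alluded to in the introduction and in \S4$(d)$): for every fixed $c \in (1/2, 1)$, the set of values
\[
\{\zeta(c + it) : t \in \mathbb{R}\}
\]
is dense in $\mathbb{C}$. Since $0 \in \mathbb{C}$, there exists a sequence $t_n \in \mathbb{R}$ with $\zeta(c + it_n) \to 0$, so $0$ belongs to the closure of $\zeta(\{\Res = c\})$, i.e., $0 \in \sigma(\mfa_c)$. Consequently $\mfa_c$ fails to be invertible.

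Third, combining the two steps yields the conclusion for every $c \in (1/2, 1)$. The only ``obstacle'' here is the invocation of Bohr--Courant, which is a deep but well-established number-theoretic input; given that input, the rest of the argument is a direct application of the spectral description of $\mfa_c$ in Theorem \ref{T4.4}. I would also note, for the record, the sharp contrast with the left critical strip: the analogous density statement is \emph{not} expected to hold for $c \in (0, 1/2)$ (this is precisely the (conjectural) non-universality of $\zeta$ in the left strip), which is exactly why the invertibility criterion of Theorem \ref{T5.1} can hold on $(0, 1/2)$ and must fail on $(1/2, 1)$, accounting for the asymmetry emphasized in \S5.
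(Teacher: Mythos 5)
Your proposal is correct and follows essentially the same route as the paper: invoke Theorem \ref{T4.4} to identify $\sigma(\mfa_c)$ with the closure of the range of $\zeta$ on $\{\Res = c\}$, then apply the Bohr--Courant density theorem (a consequence of universality in the right critical strip) to conclude that $0 \in \sigma(\mfa_c)$, hence non-invertibility. No gaps; your closing remark on the contrast with the left critical strip matches the discussion surrounding Theorem \ref{T5.1}.
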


\begin{proof}[Proof of Thms. \ref{T5.1} and \ref{T5.2}]
(i) Let us first prove Thm. \ref{T5.1}. Recall that $\mfa$ is invertible if and only if $0 \notin \sigma(\mfa)$. In light of Thm. \ref{T4.4}, we know that for any $c \in \mathbb{R}$ (with $c \neq 1$) and with $\mfa = \mfa_c$, as usual, $\sigma (\mfa_c)$ coincides with the closure of the range of $\zeta = \zeta(s)$ along the vertical line $\{ \Res = c \}$. Now, according to a result of Garunk\v stis and Steuding in \cite{GarSte} concerning the non-universality of $\zeta$ in the left critical strip $\{ 0 < \Res < 1/2 \}$, we know that conditionally (i.e., under the Riemann hypothesis), we have that for all $c \in (0, 1/2), \sigma (\mfa)$ (that is, the closure of the range of $\zeta$ on $\{\Res = c \}$) is a strict subset of $\mathbb{C}$ and, in fact, that $0 \notin \sigma(\mfa)$; see [GarSte, Lemma 4 \& Prop. 5] and their proofs. Hence, $\mfa$ is invertible.\\

\tab Conversely, assume that $\mfa = \mfa_c$ is invertible (i.e., $0 \notin \sigma (\mfa)$) for every $c \in (0, 1/2)$. Then, since (by Thm. \ref{T4.4}) $\sigma (\mfa) \supseteq \{\zeta (s): \Res = c \}$, it follows that $\zeta(s) \neq 0$ for all $s \in \mathbb{C}$ with $\Res = c$ and every $c \in (0, 1/2)$. In light of the functional equation for $\zeta$, we then deduce that $\zeta (s) \neq 0$ for all $s \in \mathbb{C}$ with $0 < \Res < 1, \Res \neq 1/2$; i.e., the Riemann hypothesis holds. This concludes the proof of Thm. \ref{T5.1}. \\

(ii) According to the Bohr--Courant theorem \cite{BohCou}, which itself is implied by the universality of $\zeta$ in the right critical strip $\{1/2 < \Res <1 \}$ (see, e.g., \cite{Ste} for an exposition), the range of $\zeta$ is dense along every vertical line $\{ \Res = c \}$, with $1/2 < c < 1$; i.e., in light of Thm. \ref{T4.4}, $\sigma (\mfa) = \mathbb{C}$ and hence, $0 \in \sigma (\mfa)$. Therefore, the conclusion of Thm. \ref{T5.2} holds: $\mfa$ is not invertible for any $c \in (0, 1/2)$, as desired. This completes the proof of Thm. \ref{T5.2}.
\end{proof}

\begin{remark}\label{R5.3}
$($a$)$ The universality of $\zeta$ implies a much stronger result than the one used in the proof of Thm. \ref{T5.2} $($part $($ii$)$ of the above proof$)$. Namely, for every integer $n \geq 0$  and every $c \in (1/2, 1)$, the range  $($along the vertical line $\{\Res =c \})$ of $(\zeta(s), \zeta'(s), \cdots, \zeta^{(n)} (s))$ is dense in $\mathbb{C}^{n+1}$, where $\zeta^{(k)} (s)$ denotes the $k$-th complex derivative of $\zeta$ at $s \in \mathbb{C}$ and $\zeta^{(0)} := \zeta$. This generalization of the Bohr--Courant theorem $($from $[$BohCou$])$ is due to Voronin in $[$Vor1$]$ and was probably a key motivation for Voronin's discovery in $[$Vor2$]$ of the universality of $\zeta$ $($in the right critical strip$).$ The universality theorem was extended to $($suitable$)$ compact subsets of the right critical strip by Bagchi and Reich in $[$Bag$]$ and $[$Rei$].$ We refer, e.g., to the books $[$KarVor, Lau, Ste$]$ along with $[$HerLap1,4$]$ for many other relevant references and extensions of the universality theorem to other $L$-functions. \\

\tab $($b$)$ The universality of $\zeta$ in the right critical strip roughly means that given any compact subset $K$ of $\{1/2 < \Res < 1 \}$ with connected complement in $\mathbb{C}$, and given any nowhere vanishing $($complex-valued$)$ continuous function $g$ on $K$ that is holomorphic on the interior of $K$ $($which may be empty$),$ then $g$ can be uniformly approximated by some $($and, in fact, by infinitely many$)$ vertical translates of $\zeta$ on $K$. \\

\tab $($c$)$ A quantized $($i.e., operator-valued$)$ version of universality is provided in $[$HerLap1,4$].$ Interestingly, in that context, the natural replacement for the complex variable $s$ is the family of truncated spectral operators $\{\partial^{(T)} \}_{T>0}$.
\end{remark}

\tab The spectral operator $\mfa = \mfa_c$ is bounded (and hence invertible, with a compact spectrum $\sigma (\mfa)$ not containing the origin), for every $c > 1$; it is unbounded (and therefore its spectrum $\sigma (\mfa)$ is a closed unbounded subset of $\mathbb{C}$) for every $c \leq 1$. It is not invertible for $1/2 < c <1$ since $0 \in \sigma (\mfa) = \mathbb{C}$; by contrast, according to Thm. \ref{T5.1}, it is invertible (i.e., $0 \notin \sigma (a)$) for every $c \in (0, 1/2)$ if and only  if the Riemann hypothesis is true. For a discussion of the mathematical phase transition occurring (conditionally) in the midfractal case when $c = 1/2$, we refer to [Lap2,3] and [HerLap1,2]. In the present context of \S5, we also refer to the brief discussion at the end of the introduction (\S1). \\

\tab Let $\mathfrak{b} = \mfa^* \mfa \ (= \mfa \mfa^*$, since $\mfa$ is normal). Then $\mathfrak{b} = \mathfrak{b}_c$ is a nonnegative self-adjoint operator. It is bounded (resp., invertible) if and only if $\mfa$ is; that is, iff $c \geq 1$ (resp., $c > 1$ or, conditionally, $0 < c < 1/2$). Furthermore, $\mathfrak{b}$ is invertible if and only if it is bounded away from zero (i.e., $\mathfrak{b} \geq \alpha$, for some constant $\alpha > 0$, which may depend on $c$). We may now close this section by stating the following theorem, which provides several reformulations of the Riemann hypothesis discussed in this paper, as well as a new one (the last one given in Thm. \ref{T5.4}).

\begin{theorem}\label{T5.4}
The following statements are equivalent$:$
\begin{enumerate} \itemsep1pt \topsep0pt   
\item[$($i$)$] The Riemann hypothesis is true.
\item[$($ii$)$] The spectral operator $\mfa$ is invertible for every $c \in (0, 1/2)$.
\item[$($iii$)$] The spectral operator $\mfa$ is quasi-invertible for every $c \in (0, 1/2)$ $($or equivalently, for every $c \in (1/2, 1))$.
\item[$($iv$)$] For every $c \in (0, 1/2)$, the unbounded, nonnegative, self-adjoint operator $\mathfrak{b}$ is bounded away from zero $($i.e., is invertible$).$
\end{enumerate}
\end{theorem}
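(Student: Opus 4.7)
The plan is to establish the four-way equivalence in Theorem \ref{T5.4} by separately verifying (i) $\Leftrightarrow$ (ii), (i) $\Leftrightarrow$ (iii), and (ii) $\Leftrightarrow$ (iv). Since the new content of the paper is concentrated in Theorem \ref{T5.1}, just proved, and Theorem \ref{T4.7}, obtained in \cite{HerLap1} and recalled above, the bulk of the work has already been carried out; Theorem \ref{T5.4} amounts to a consolidation, augmented by an operator-theoretic identification of invertibility of $\mfa$ with positivity of $\mathfrak{b}$ away from zero.

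For (i) $\Leftrightarrow$ (ii), I would simply invoke Theorem \ref{T5.1}. For (i) $\Leftrightarrow$ (iii), I would invoke Theorem \ref{T4.7}, observing that the equivalence between ``for every $c \in (0,1/2)$'' and ``for every $c \in (1/2,1)$'' is a direct consequence of the functional equation $\xi(s) = \xi(1-s)$: the critical zeros of $\zeta$ are distributed symmetrically about the line $\{\Res = 1/2\}$, so the absence of zeros on $\{\Res = c\}$ for all $c \in (0,1/2)$ is equivalent to the absence of zeros on $\{\Res = 1-c\}$ for all such $c$, i.e., on $\{\Res = c'\}$ for all $c' \in (1/2,1)$.

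For (ii) $\Leftrightarrow$ (iv), I would argue via the functional calculus for normal operators. By Theorem \ref{T4.4}, $\mfa = \mfa_c$ is a closed, densely defined, normal operator on $\mathbb{H}_c$, and therefore $\mathfrak{b} := \mfa^*\mfa = \mfa\mfa^*$ is a well-defined nonnegative self-adjoint operator on $\mathbb{H}_c$ satisfying $\mathfrak{b} = |\mfa|^2$ in the sense of the functional calculus. The polar decomposition $\mfa = U|\mfa|$ then shows that $\mfa$ admits a bounded inverse if and only if $|\mfa|$ does, which in turn holds if and only if $\mathfrak{b} = |\mfa|^2$ does. Applying the spectral theorem to the nonnegative self-adjoint operator $\mathfrak{b}$, we have $\sigma(\mathfrak{b}) \subseteq [0,\infty)$, and $0 \notin \sigma(\mathfrak{b})$ is equivalent to $\sigma(\mathfrak{b}) \subseteq [\alpha,\infty)$ for some $\alpha > 0$, which in turn is equivalent (via the spectral theorem) to the operator inequality $\mathfrak{b} \geq \alpha I$, i.e., to $\mathfrak{b}$ being bounded away from zero. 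This gives (ii) $\Leftrightarrow$ (iv).

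The main potential obstacle is not conceptual but rather one of bookkeeping of domains, since for $c \in (0, 1/2) \subset (0,1]$ both $\mfa$ and $\mathfrak{b}$ are unbounded (as noted just before the statement of Theorem \ref{T1} in \S1). Here I would rely on the fact, already built into Theorem \ref{T4.4} and established in \cite{HerLap1}, that $\mfa$ is densely defined, closed and normal, so that the polar decomposition and the spectral theorem for unbounded normal operators apply in their standard form, with $D(\mathfrak{b}) = D(\mfa^*\mfa)$ a core on which $\mathfrak{b}$ acts as $|\mfa|^2$. Modulo these purely functional-analytic technicalities, Theorem \ref{T5.4} follows by assembling Theorems \ref{T4.7}, \ref{T5.1}, and the standard spectral-theoretic characterization of invertibility for nonnegative self-adjoint operators.
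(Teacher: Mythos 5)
Your proposal is correct and follows exactly the paper's own route: (i) $\Leftrightarrow$ (ii) by Theorem \ref{T5.1}, (i) $\Leftrightarrow$ (iii) by Theorem \ref{T4.7}, and (ii) $\Leftrightarrow$ (iv) by the standard spectral-theoretic fact that the normal operator $\mfa$ is invertible iff $\mathfrak{b} = \mfa^*\mfa$ is, iff $\mathfrak{b}$ is bounded away from zero (the paper merely asserts this in the discussion preceding the theorem, whereas you supply the polar-decomposition details). No gaps.
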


\begin{proof}
(i) $\Leftrightarrow$ (ii): This follows from Thm. \ref{T5.1}. \\
(i) $\Leftrightarrow$ (iii): This follows from Thm. \ref{T4.7}. \\
(ii) $\Leftrightarrow$ (iv): This follows from the discussion immediately preceding the statement of this theorem.
\end{proof}

We note that, as was discussed earlier, the equivalence between (i) and (iii) is a symmetric criterion for RH. By contrast, the equivalence between (i) and (ii) is an asymmetric criterion for RH. Therefore, so is the equivalence between (i) and (iv), which, in practice, might provide the most hopeful way to try to prove the Riemann hypothesis along those lines. The author has obtained some preliminary (but not definitive) results in this direction, based on explicit computations and by using a suitable class of ``test functions'' in $D(\mathfrak{b})$, the domain of $\mathfrak{b}$, but it is premature to discuss them here.

\begin{remark}\label{R5.4}
$($a$)$ It is natural to wonder what is the inverse of $\mfa = \mfa_c$ when $0 < c <1/2.$ We conjecture that, under RH, it is given by $\mfa^{-1}(f) = \sum_{n=1}^\infty \mu (n) n^{-\partial} (f)$ $($so that $\mfa^{-1} (f) (t) = \sum_{n=1}^\infty \mu (n) f (t - \log n))$, for all $f \in D(\mfa) = \mathbb{H}_c$, where $\mu = \mu(n)$ is the M\" obius function $($compare with part $($3$)$ of \S 4$(c))$. Accordingly, the quantized Riemann zeta function $\mfa = \zeta (\partial)$ would behave very differently from the ordinary $($complex-valued$)$ Riemann zeta function $\zeta = \zeta (s)$. Indeed, even under RH, the Dirichlet series $\sum_{n=1}^\infty \mu (n) n^{-s}$ cannot converge for any $s_0 \in \mathbb{C}$ such that $0 < Re (s_0) < 1/2$. If it did, then its sum would have to be holomorphic $($see $[$HardWr$]$ or $[$Ser, \S VI.2$])$ and to coincide with $(1/\zeta)(s)$ on $\{\Res > Re (s_0) \}$, which is impossible because $1/\zeta$ must have a pole at every zero of $\zeta$ along the critical line $\{\Res = 1/2 \}.$ $($See also, e.g., $[$Edw$].)$ \\

\tab $($b$)$ The same methods as those discussed in this paper can be applied to a very large class of $L$-functions $($or arithmetic zeta functions$),$ all of which are expected to satisfy the generalized Riemann hypothesis $($GRH$);$ see, e.g., $[$Sarn, ParSh$],$ $[$Lap-vFr3, App. A$]$ and $[$Lap6, Apps. B,C \& E$].$ For obtaining the analog of the results of \S 5, we would need to use, in particular, the universality results discussed in $[$Ste$].$ The relevant results of $[$GarSte$]$ used in the proof of Thm. \ref{T5.1} would also need to be extended to this broader setting.\\

\tab $($c$)$ In $[$Lap8,9$]$, the author has adapted $($and extended in various directions$),$ the present theory to a new functional analytic framework $($based on weighted Bergman spaces of entire functions, see $[$HedKorZh, AtzBri$]).$ At this early stage, it seems that depending on the goal being pursued, one approach or the other may be more appropriate. In particular, the approach discussed in the present article $($beginning with \S4$)$ is ideally suited for formulating and obtaining the new results of \S5.
\end{remark}


\section*{Acknowledgment}

This research was partially supported by the US National Science Foundation (NSF) under grants DMS-0707524 and DMS-1107750 (as well as by many earlier NSF grants since the mid-1980s). Part of this work was completed while the author was a Visiting Professor at the Institut des Hautes Etudes Scientifiques (IHES) in Paris/Bures-sur-Yvette, France.

\section{Glossary}

\begin{itemize}
\setlength\itemsep{.1em}
\item[] abscissa of convergence\dotfill 5
\item[] $a_k \sim b_k$ as $k \rightarrow \infty$, asymptotically equivalent sequences\dotfill 10 
\item[] $c \ell (A)$, closure of a subset $A$ in $\mbc$ (or in $\mbr^N$)\dotfill 15
\item[] $\mfa = \mfa_c$, spectral operator\dotfill 5,13,15
\item[] $\mfa^{(T)} = \mfa_c^{(T)}$, truncated spectral operator\dotfill 12
\item[] $\mfb = \mfa \mfa^* = \mfa^* \mfa$, where $\mfa$ is the spectral operator\dotfill 5,19
\item[] complex dimensions (of a fractal string)\dotfill 7
\item[] CS, Cantor string\dotfill 7
\item[] $\mbc$, the field of complex numbers\dotfill 5
\item[] $\widetilde{\mbc} := \mbc \cup \{\infty \}$, the Riemann sphere\dotfill 16
\item[] Dirichlet series\dotfill 7,15,19
\item[] $D$, Minkowski dimension of a fractal string (or of $A \subset \mbr^N$)\dotfill 5,11
\item[] $\mcd= \mcd_\mcl$, the set of complex dimensions of a fractal string $\mcl$ (or of its boundary)\dotfill 7 
\item[] $dist (x, A) = \inf \{|x-a|: a \in A \}$, Euclidean distance from $x$ to $A \subset \mbr^N$\dotfill 5
\item[] $\partial = \partial_c$, infinitesimal shift of the real line\dotfill 14
\item[] $\partial^{(T)} = \partial_c^{(T)}$, truncated infinitesimal shift\dotfill 16
\item[] $\partial \Omega$, boundary of a subset $\Omega$ of $\mbr^N$ (or of a fractal string)\dotfill 5,11
\item[] $\delta_x$, Dirac measure at the point $x$\dotfill 8
\item[] Euler product\dotfill 7,15
\item[] $|E| = |E|_N, N$-dimensional Lebesgue measure of $E \subset \mbr^N$\dotfill 5,11
\item[] $\Gamma (t) := \int_0^{+ \infty} x^{t-1} e^{-x} dx$, the gamma function\dotfill 7
\item[] $\mbh_c := L^2 (\mbr, e^{-2ct}dt)$, weighted Hilbert space\dotfill 14
\item[] $i = \sqrt{-1}$, imaginary unit\dotfill 7
\item[] (ISP)$_D$, inverse spectral problem for the fractal strings of Minkowski dimension $D$\dotfill 11  
\item[] $\xi = \xi (s)$, completed (or global) Riemann zeta function\dotfill 7
\item[] $\log_a x$, the logarithm of $x > 0$ with base $a > 0$; $y = \log_a x \Leftrightarrow x = a^y$\dotfill 7
\item[] $\log x := \log_e x$, the natural logarithm of $x; \, y = \log x \Leftrightarrow x = e^y$\dotfill 8
\item[] $\mcl = \{\ell_j \}_{j=1}^\infty$, a fractal string with lengths $\ell_j$\dotfill 5
\item[] Minkowski content (of a Minkowski measurable fractal string)\dotfill 6
\item[] Minkowski nondegenerate (fractal string)\dotfill 6
\item[] Minkowski measurable (fractal string)\dotfill 6
\item[] $\mcm_{*,d}$ and $\mcm_d^*$, lower and upper $d$-dimensional contents of a fractal string $\mcl$ (or of a bounded set $A \subset \mbr^N$) $\ldots$\dotfill 5,11
\item[] $\mcm, \mcm_*$ and $\mcm^*$, Minkowski content, lower and upper Minkowski content (of $\mcl$ or of $A \subset \mbr^N$)\dotfill 6,11
\item[] $\mu (n)$, M\" obius function\dotfill 19
\item[] $N_\mcl$, geometric counting function of a fractal string $\mcl$\dotfill 5--6
\item[] $N_\nu$, spectral (or frequency) counting function of a fractal string or drum\dotfill 7,11
\item[] $\Omega_\varepsilon$, $\varepsilon$-neighborhood of a fractal string (or of a fractal drum)\dotfill 5,11 
\item[] $\mathcal{P}$, the set of prime numbers\dotfill 7
\item[] $res (f; \omega)$, residue of the meromorphic function $f$ at $\omega \in \mbc$\dotfill 8
\item[] $\sigma(\mcl)$, spectrum of a fractal string $\mcl$\dotfill 6
\item[] $\sigma (T)$, spectrum of the operator $T$\dotfill 14
\item[] $\widetilde{\sigma} (T)$, extended spectrum of the operator $T$\dotfill 16
\item[] $V (\varepsilon)$, volume of the $\varepsilon$-neighborhood of a fractal string (or drum)\dotfill 5,11
\item[] $W (x)$, Weyl term\dotfill 10--11
\item[] $[x]$, integer part of $x \in \mbr$\dotfill 10
\item[] $\zeta (s) = \sum_{j=1}^\infty j^{-s}$, Riemann zeta function\dotfill 6--7 
\item[] $\zeta_\mcl (s) = \sum_{j=1}^\infty (\ell_j)^s$, geometric zeta function of a fractal string $\mcl = \{\ell_j \}_{j=1}^\infty$\dotfill 6 
\item[] $\zeta_\nu = \zeta_{\nu, \mcl}$, spectral zeta function of a fractal string $\mcl$\dotfill 6,10 
\item[] $||\cdot||_c$, norm of the Hilbert space $\mbh_c$\dotfill 14
\item[] $<\cdot,\cdot>_c$, inner product of the Hilbert space $\mbh_c$\dotfill 14
\end{itemize}

\end{document}